\newtheorem{lemma}{Lemma}[section]
\newtheorem{definition}[lemma]{Definition}
\newtheorem{theorem}[lemma]{Theorem}
\newtheorem{observation}[lemma]{Observation}
\newtheorem{corollary}[lemma]{Corollary}
\newcommand{\ovee}{\mathbin{\mathpalette\make@circled\vee}}
\newcommand{\make@circled}[2]{%
  \ooalign{$\m@th#1\smallbigcirc{#1}$\cr\hidewidth$\m@th#1#2$\hidewidth\cr}%
}
\newcommand{\smallbigcirc}[1]{%
  \vcenter{\hbox{\scalebox{0.82}{$\m@th#1\bigcirc$}}}%
}
\newcommand{\canonical}{{canonical}}
\newcommand{\weightdirect}{{weighted directed}}
\newcommand{\minmax}{{(min,max)\text{-}product}}
\newcommand{\tminmax}{{target\text{-}(min,max)\text{-}product}}
\newcommand{\Tminmax}{{Target\text{-}(min,max)\text{-}product}}
\newcommand{\tminmaxShortcut}{{T\text{-}(min,max)\text{-}product}}
\newcommand{\AlgoAPSPtoMINMAX}[2]{\texttt{APSP-To-TMinMax}$(#1,#2)$}
\newcommand{\name}{{restricted T\text{-}(min,max)\text{-}product}}
\newcommand{\Name}{{Restrictred T\text{-}(min,max)\text{-}product}}
\title{$\{-1,0,1\}$-APSP and (min,max)-Product Problems\footnote{This work is partially supported by ISF grants (no. 1278/16 and 1926/19),  by a grant from the United States - Israel Binational Science Foundation (BSF) (no. 2018364), and by an ERC grant MPM under the European Union's Horizon 2020 research and innovation programme (no. 683064).}}
\author[ ]{Hodaya Barr}
\author[ ]{Tsvi Kopelowitz}
\author[ ]{Ely Porat}
\author[ ]{Liam Roditty}
\affil[ ]{Bar-Ilan University, Ramat Gan, Israel}
\affil[ ]{\texttt{odayaben@gmail.com, kopelot@gmail.com}}
 \affil[ ]{ \texttt{porately@cs.biu.ac.il, liamr@macs.biu.ac.il}}
\date{}
\begin{document}
\maketitle
  \algsetup{
    linenodelimiter={.}
  }

\begin{abstract}
In the $\{-1,0,1\}$-APSP problem the goal is to compute  all-pairs shortest paths (APSP) on a directed graph whose edge weights are all from $\{-1,0,1\}$.
In the (min,max)-product problem the input is two $n\times n$ matrices $A$ and $B$, and the goal is to output the (min,max)-product of $A$ and $B$.

This paper provides a new algorithm for the $\{-1,0,1\}$-APSP problem via a simple reduction to the target-(min,max)-product problem where the input is three $n\times n$ matrices  $A,B$, and $T$, and the goal is to output a Boolean $n\times n$ matrix $C$ such that the $(i,j)$ entry of $C$ is 1 if and only if the $(i,j)$ entry of the (min,max)-product of $A$ and $B$ is exactly the $(i,j)$ entry of the target matrix $T$.
If (min,max)-product can be solved in $T_{MM}(n) = \Omega(n^2)$ time then it is straightforward to solve target-(min,max)-product in $O(T_{MM}(n))$ time.
Thus, given the recent result of Bringmann, Künnemann, and Wegrzycki [STOC 2019], the $\{-1,0,1\}$-APSP problem can be solved in the same time needed for solving approximate APSP on graphs with positive weights.

Moreover, we design a simple algorithm for target-(min,max)-product 
when the inputs are restricted to the family of inputs generated by our reduction. Using fast rectangular matrix multiplication, the new algorithm is faster than the current best known algorithm for (min,max)-product.

\end{abstract}

\newpage
\section{Introduction}
The \emph{all-pairs shortest paths} (APSP) problem is one of the most fundamental algorithmic problems in computer science, and is defined as follows.
Let $G=(V,E)$  be a weighted (directed) graph  with weight function $w: E\rightarrow\mathbb{R}$, where $V=\{v_1,v_2,\ldots, v_n\}$.
Let $A$ be the weighted adjacency matrix of $G$.
For $v_i,v_j\in V$, the weight of the shortest path between $v_i$ and $v_j$ is denoted by $a^*_{ij}$, and the matrix containing the weights of the shortest paths between all pairs of vertices is denoted by $A^*$.
In the APSP problem the goal is to compute $A^*$.

The Floyd-Warshall APSP algorithm~\cite{floyd62,roy1959transitivite,warshall1962theorem}, is a dynamic programming algorithm, whose runtime is $O(n^3)$.
After several improvements of poly-logarithmic factors in the time cost~\cite{fredman76,takaoka91, Han04, Takaoka04, Takaoka05, Zwick04, Chan05,Han06, chan2006all,Chan10,HanT12}, in a recent breakthrough, Williams~\cite{Williams18} and Chan and Williams~\cite{CW16} designed the current best algorithm whose time cost is $O(\frac{n^3}{2^{\Omega(\sqrt{\log n})}})$.
The lack of success in designing a truly sub-cubic algorithm for APSP has led to a popular conjecture that any algorithm for APSP requires $\Omega(n^{3-o(1)})$ time~\cite{RZ11,WW10}.
However, in some special cases, faster algorithms are known.
We describe two examples which are strongly related to the results in this paper.
Throughout the paper, let $\omega$ denote the exponent in the fastest fast matrix multiplication (FMM) algorithm; the current best upper bound on $\omega$ is roughly  $2.3728639$~\cite{Gall14a}.

\paragraph{Approximate APSP with positive real weights and \minmax{}.}
The first example is the \emph{approximate positive APSP} problem where the goal is to compute a $1\pm \epsilon$  approximation of the distances for a graph with positive weights.
A recent result by Bringmann, Künnemann, and Wegrzycki~\cite{BKW19} shows that approximate positive APSP can be solved in $\tilde{O}\left(\frac{n^{\frac{3+\omega}{2}}}{poly(\epsilon)}\right)=\tilde{O}\left(\frac{n^{2.686}}{poly(\epsilon)}\right) $ time\footnote{We use the standard $\tilde O$ notation to suppress poly-logarithmic factors.}.
The algorithm in~\cite{BKW19} is obtained by showing an equivalance between approximate positive APSP and  the \minmax{} problem which is defined as follows.
Throughout the paper we follow the notion that matrices are denoted with capital letters, while the entries of matrices are denoted using the same letter, just lowercase, with the appropriate indices indicated as subscripts.
So, for example the $(i,j)$ entry of matrix $A$ is denoted by $a_{ij}$.

\begin{definition}[\minmax{}]
In the \emph{\minmax{}} problem the input is two matrices $A\in \mathbb{R}^{n\times n}$ and $B \in \mathbb{R}^{n\times n}$, and the output is a matrix $C\in\mathbb{R}^{n\times n}$ such that the $(i,j)$ entry of $C$ is
$$c_{ij} = \min_{k=1}^{n} \{\max(a_{ik}, b_{kj})\}.$$
We denote the \minmax{} between $A$ and $B$ with the
$\ovee$ operator; that is, $C = A\ovee B$.
\end{definition}
Duan and Pettie~\cite{DP09} showed that the \minmax{} problem can be solved in $O(n^{(3+\omega)/2})$ time.

\paragraph{$\{-1,0,1\}$-APSP.}
The second example is the $\{-1,0,1\}$\emph{-APSP} problem where the weights are from the set $\{-1,0,1\}$.
Negative edge weights introduce a new depth to the challenges involved in solving APSP, and
the $\{-1,0,1\}$-APSP problem is perhaps the purest version of APSP that allows for negative weights.

Alon, Galil and Marglit~\cite{AGM97} designed an algorithm for the $\{-1,0,1\}$\emph{-APSP} problem  whose runtime is $O(n^{\frac{3+\omega}{2}})=O(n^{2.686})$.
Zwick~\cite{zwick02} improved the runtime to $O(n^{2+\mu}) = O(n^{2.5286})$, where $\mu$ satisfies $\omega(1,\mu,1) = 1+2\mu$, and $\omega(1,\mu,1)$ is the exponent of $n$ in the fastest algorithm for multiplying two rectangular matrices of sizes $n\times n^{\mu}$ and $n^\mu \times n$.
The current best upper bound on $\mu$ is roughly $0.5286$~\cite{Gall14a}.

\subsection{Our Results and Algorithmic Overview}
In this paper we introduce a reduction from $\{-1,0,1\}$-APSP to \minmax{}, which combined with~\cite{BKW19} implies that $\{-1,0,1\}$-APSP is reducible to approximate positive APSP.
Specifically, we use the framework of Alon et al.~\cite{AGM97} who follow the paradigm of Seidel's APSP algorithm for unweighted undirected graphs~\cite{Seidel95}.
Seidel's algorithm 
first recursively solves the undirected unweighted APSP problem on a specially constructed graph $G'$ so that if the distance between vertices $v_i$ and $v_j$ in the original graph is $a^*_{ij}$, then the distance between $v_i$ and $v_j$ in $G'$ is $t^*_{ij}$ where $t^*_{ij} = \Big\lceil \frac{a^*_{ij}}{2}\Big\rceil$.
Then, the problem of computing $a^*_{ij}$ reduces to the problem of establishing whether $a^*_{ij}$ is odd or even.
The algorithm of~\cite{AGM97} follows the same structure as Seidel's algorithm, but instead of directly computing the parity of $a^*_{ij}$, the algorithm uses a brute-force like method.

Instead, we use a more direct approach for computing the parity of $a^*_{ij}$ by solving two instances of the \emph{\tminmax{} problem} where the input is the same as the input for the \minmax{}  problem together with a third \emph{target matrix} $T\in \mathbb{R}^{n\times n}$.
The output is a Boolean matrix whose $(i,j)$ entry is  an indicator of whether the $(i,j)$ entry of the \minmax{} is equal to the $(i,j)$ entry in the target matrix. Formally:

\begin{definition}[\Tminmax{}]\label{def:t-minmax}
 In the \emph{\tminmax{}} (\tminmaxShortcut{}) problem the input is three matrices $A\in \mathbb{R}^{n\times n}$, $B \in \mathbb{R}^{n\times n}$ and $T\in\mathbb{ R}^{n\times n}$, and the output is a Boolean matrix $C \in \{0,1\}^{n\times n}$ such that $c_{ij} = 1$ if and only if $t_{ij} = \min_{k=1}^{n}\{\max(a_{ik}, b_{kj})\}$. The \tminmaxShortcut{} operation is denoted by $C = A \ovee_T B$.
\end{definition}
Given three matrices $A,B$ and $T$ it is straightforward to compute the \tminmaxShortcut{} of $A,B$ and $T$ by first computing $A\ovee B$ and then spending another $O(n^2)$ time to compare each entry $A\ovee B$ with the corresponding entry in $T$.

Our main result is summarized by the following theorem (the proof is given in Section~\ref{sec:reduction}).

\begin{theorem}\label{thm:main}
Suppose that there exists an algorithm for solving the \tminmaxShortcut{} problem in $T_{TMM}(n)\ge n^2$ time.
Then there exists an algorithm for solving $\{-1,0,1\}$-APSP in $\tilde O(T_{TMM}(n)+n^\omega)$ time.
\end{theorem}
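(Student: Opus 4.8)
The plan is to follow the recursive distance-halving paradigm of Seidel~\cite{Seidel95}, in the form adapted to $\{-1,0,1\}$ weights by Alon, Galil and Margalit~\cite{AGM97}, but to replace their brute-force parity-recovery step with two invocations of the assumed \tminmaxShortcut{} algorithm. First I would preprocess $G$: compute its transitive closure in $O(n^\omega)$ time, so that the pairs with $a^*_{ij}=+\infty$ are identified and set aside, and detect negative cycles in $\tilde O(n^\omega)$ time by a clamped $(\min,+)$-closure computation, reporting the corresponding $-\infty$ distances. After this preprocessing every relevant distance lies in $[-(n-1),n-1]$.

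Next, from the $\{-1,0,1\}$-weighted adjacency matrix $A$ of $G$, I would build a $\{-1,0,1\}$-weighted graph $G^{(1)}$ on the same vertex set whose shortest-path distances form the ``halved'' matrix $T^*$ with $t^*_{ij}=\lceil a^*_{ij}/2\rceil$. The adjacency matrix of $G^{(1)}$ can be obtained from $A$ using a constant number of Boolean matrix products, in $\tilde O(n^\omega)$ time, and I would then recurse on $G^{(1)}$ to compute $T^*$. Because the distances of $G$ are bounded in absolute value by $n-1$ and those of $G^{(1)}$ by $\lceil (n-1)/2\rceil$, the distance bound halves at every level, the recursion has depth $O(\log n)$, and at the bottom of the recursion --- where all distances are bounded by a constant --- $A^*$ is computed directly in $\tilde O(n^\omega)$ time.

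It then remains to recover $A^*$ from $T^*$. Since $t^*_{ij}=\lceil a^*_{ij}/2\rceil$ we have $a^*_{ij}\in\{2t^*_{ij}-1,\,2t^*_{ij}\}$, so the entire recovery reduces to deciding, for every pair $(i,j)$, the parity of $a^*_{ij}$. I would encode this as a \tminmaxShortcut{} instance: from $A$ and $T^*$ construct matrices $A',B',T'$ --- with $T'$ essentially $2T^*$ up to an additive shift --- so that the $(i,j)$ entry of $A'\ovee_{T'}B'$ is $1$ exactly when $a^*_{ij}=2t^*_{ij}$, using a second analogous instance for the other candidate value. The reason the \emph{target} version is the right primitive here is precisely that the target matrix is what makes the identity $\min_k\max(a'_{ik},b'_{kj})=t'_{ij}$ serve as the parity indicator; these matrices are computable in $\tilde O(n^\omega)$ time, and from the two Boolean outputs one reads off $A^*$ from $T^*$ in $O(n^2)$ additional time.

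For the running time, each of the $O(\log n)$ levels performs $O(1)$ calls to the \tminmaxShortcut{} algorithm, each costing $T_{TMM}(n)$, together with $\tilde O(n^\omega)$ time for the halved-graph construction, the transitive closure, and bookkeeping; since $T_{TMM}(n)\ge n^2$, summing over the levels gives $\tilde O(T_{TMM}(n)+n^\omega)$ overall. The routine parts of this plan are the recursion and the auxiliary $\tilde O(n^\omega)$ computations; the real work lies in (i) defining $G^{(1)}$ so that it remains $\{-1,0,1\}$-weighted while having distances exactly $\lceil a^*_{ij}/2\rceil$ (the naive ``length-$\le 2$ walks with halved weights'' construction does not quite achieve this), and (ii) exhibiting and verifying the matrices $A',B',T'$ that turn the parity question into a \tminmaxShortcut{} identity. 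I expect (ii) to be the main obstacle, since it is exactly where the structure of $(\min,\max)$-product, as opposed to $(\min,+)$-product, has to be exploited.
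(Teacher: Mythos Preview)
Your plan matches the paper's approach: recursive distance-halving as in~\cite{AGM97}, with the parity-recovery step replaced by two \tminmaxShortcut{} calls. You have correctly isolated the two genuine obstacles; here is how the paper resolves them.

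For~(i), the halved graph is not built directly from $A$. One first passes to a \emph{canonical} graph $G_c$ of $G$ (constructible in $O(n^\omega)$ time, from~\cite{AGM97}), with adjacency matrix $C$, and sets $T=\lceil C^{\le 2}/2\rceil$. The defining property of $G_c$ is that every shortest path with more than one edge avoids weight-$0$ edges; this is what makes the halving exact, and it is also essential for~(ii).

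For~(ii), your guess that the target is ``essentially $2T^*$'' is off: the two targets are $T^*$ and $M$ with $m_{ij}=t^*_{ij}-1$. The first operand in both products is $T^*$ itself, and the second operands $X^{+},X^{-}$ are $\{-\infty,+\infty\}$-valued, with $x^{+}_{kj}=-\infty$ iff $c_{kj}=+1$ and $x^{-}_{kj}=-\infty$ iff $c_{kj}=-1$. Then $\max(t^*_{ik},x^{\pm}_{kj})$ collapses to $t^*_{ik}$ when $c_{kj}=\pm1$ and to $+\infty$ otherwise, so the $(\min,\max)$-product returns $\min\{t^*_{ik}:c_{kj}=\pm1\}$. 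A short case analysis on whether the path through $v_k$ is a shortest path and on the parity of $a^*_{ij}$ shows this equals the corresponding target exactly when $a^*_{ij}$ is odd and some shortest path in $G_c$ ends in a $\pm1$ edge; the canonical property then guarantees that an odd $a^*_{ij}$ always admits such a path, so $a^*_{ij}$ is odd iff $z^+_{ij}=1$ or $z^-_{ij}=1$.

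One further difference: the paper does not preprocess negative cycles separately. It runs the recursion on the $\delta$-regularity parameter of~\cite{AGM97} (start at $\delta=n^2$, halve at each level, base case $\delta=1$ solved in $O(n^\omega)$), which handles the $-\infty$ entries automatically. Your ``clamped $(\min,+)$-closure in $\tilde O(n^\omega)$'' assertion is not obviously correct and is, in any case, unnecessary.
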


\paragraph{A simple algorithm for \name{}.}
In addition to showing that $\{-1,0,1\}$-APSP is reducible to $\tminmaxShortcut{}$ (and therefore also to \minmax{}), we also design a simple algorithm for solving \tminmaxShortcut{} for the restricted family of inputs generated by our reduction.
Specifically, the
 entries in the second matrix $B$ are $\pm \infty$, and the target matrix $T$ has the property that for any\footnote{We use the standard notation that $[n] = \{1,2,3,\ldots,n\}$.} $i,j\in [n]$, $t_{ij}\le \min_{k=1}^{n} \{\max(a_{ik}, b_{kj})\}$.
Formally, the \tminmaxShortcut{} problem on this family of inputs is defined as follows.
\begin{definition}[\name{}] \label{def:restricted-T-minmax}
In the \emph{\name{} problem} the input is composed of a matrix $A \in \mathbb{R}^{n\times n}$, a matrix $B \in \{-\infty,\infty\}^{n\times n}$,
and a target matrix $T \in \mathbb{R}^{n\times n}$, where for every $i,j\in [n]$, $t_{ij} \leq  \min_{k=1}^{n} \{\max(a_{ik}, b_{kj})\}$.
The output is a matrix $C \in \{0,1\}^{n\times n}$, such that for every $i,j\in [n]$, $c_{ij} = 1$ if and only if $t_{ij} = \min_{k=1}^{n} \{\max(a_{ik}, b_{kj})\}$.
\end{definition}

In Section~\ref{sec:restricted-algo} we prove the following theorem.

\begin{theorem}\label{thm:restricted}
For any $0\le t\le 1$, there exists an algorithm for the \name{} problem whose time cost is $\Tilde{O}(n^{2+t} +n^{\omega(2-t,1,1)})$ time.
\end{theorem}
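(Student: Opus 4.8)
The plan is to use the promise to turn the problem into an exact‑equality matching problem, and then to solve it with one rectangular Boolean matrix multiplication together with a cheap brute‑force pass. Write $m_{ij}:=\min_{k=1}^{n}\{\max(a_{ik},b_{kj})\}$ and, for a column $j$, let $S_j=\{k:b_{kj}=-\infty\}$. Since $b_{kj}\in\{-\infty,\infty\}$, the term $\max(a_{ik},b_{kj})$ equals $a_{ik}$ when $b_{kj}=-\infty$ and equals $\infty$ otherwise, so $m_{ij}=\min_{k\in S_j}a_{ik}$ (with $\min\emptyset=\infty$). The promise $t_{ij}\le m_{ij}$ says precisely that $t_{ij}\le a_{ik}$ for every $k\in S_j$; hence $t_{ij}=m_{ij}$ holds if and only if this minimum is attained at the value $t_{ij}$, i.e.\ if and only if there is a $k\in S_j$ with $a_{ik}=t_{ij}$. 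So the first step is to reduce \name{} to the following \emph{equality} problem: output the Boolean matrix $C$ with $c_{ij}=1$ if and only if there exists $k$ with $b_{kj}=-\infty$ and $a_{ik}=t_{ij}$. Let $B'\in\{0,1\}^{n\times n}$ be given by $b'_{kj}=1$ iff $b_{kj}=-\infty$.

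For each row $i$ I sort its entries and group the column indices by value, $K_{i,v}=\{k:a_{ik}=v\}$, so the $K_{i,v}$ partition $[n]$ and $\sum_v|K_{i,v}|=n$. Fixing the threshold $n^{t}$, call a value $v$ \emph{heavy} for row $i$ if $|K_{i,v}|\ge n^{t}$ and \emph{light} otherwise. A cell $(i,j)$ is classified by its target: if $t_{ij}$ is not a value appearing in row $i$ of $A$ then $c_{ij}=0$ immediately; if $t_{ij}$ equals a \emph{light} value $v$ of row $i$, I compute $c_{ij}=\bigvee_{k\in K_{i,v}}[\,b_{kj}=-\infty\,]$ by scanning $K_{i,v}$ in $O(n^{t})$ time, so summed over all cells this brute‑force pass costs $O(n^{2+t})$.

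The heavy cells are handled by one rectangular matrix multiplication. Since each $K_{i,v}$ deemed heavy has size at least $n^{t}$ and the $K_{i,v}$ partition $[n]$, each row has at most $n^{1-t}$ heavy values, so there are at most $N\le n^{2-t}$ heavy pairs $(i,v)$ in total. I form the Boolean matrix $M\in\{0,1\}^{N\times n}$ whose rows are indexed by heavy pairs, with $M_{(i,v),k}=1$ iff $k\in K_{i,v}$, and I compute $R=M\cdot B'$; this is an $n^{2-t}\times n$ by $n\times n$ Boolean product, computable in $\tilde O(n^{\omega(2-t,1,1)})$ time (fewer than $n^{2-t}$ rows only helps). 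By construction $R_{(i,v),j}\ge 1$ iff there is $k$ with $a_{ik}=v$ and $b_{kj}=-\infty$, so for a cell $(i,j)$ whose target $t_{ij}$ is a heavy value $v$ of row $i$ I set $c_{ij}=[\,R_{(i,v),j}\ge1\,]$, retrieving the row of $M$ for $(i,v)$ from a dictionary built while assembling $M$. The remaining bookkeeping — sorting the rows of $A$, building the dictionaries, forming $M$, and the final $O(n^{2})$ scan filling in $C$ — costs $\tilde O(n^{2})$, which is dominated since $t\ge 0$ and $\omega(2-t,1,1)\ge 3-t\ge 2$. Hence the total running time is $\tilde O(n^{2+t}+n^{\omega(2-t,1,1)})$.

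The main obstacle is conceptual rather than technical: one has to notice that the promise $t_{ij}\le m_{ij}$ converts the comparison ``$\exists k\in S_j$ with $a_{ik}\le t_{ij}$'' — for which the natural threshold matrices are not sparse and the heavy/light accounting fails, essentially the difficulty of general $(\min,\max)$-product — into the exact‑equality condition ``$\exists k\in S_j$ with $a_{ik}=t_{ij}$''. Once the problem is in this form, the only remaining decisions are the balancing step, choosing the heaviness threshold $n^{t}$ so as to trade the brute‑force cost $n^{2+t}$ against the multiplication cost $n^{\omega(2-t,1,1)}$, and the routine counting argument bounding the number of heavy pairs by $n^{2-t}$.
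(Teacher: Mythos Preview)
Your proof is correct and follows essentially the same approach as the paper: reduce to the exact-equality condition ``$\exists k$ with $a_{ik}=t_{ij}$ and $b_{kj}=-\infty$'' using the promise, split each row's values into heavy and light at threshold $n^{t}$, handle light targets by brute-force scanning in $O(n^{2+t})$ total, and handle heavy targets via a single $n^{2-t}\times n$ by $n\times n$ Boolean matrix multiplication in $\tilde O(n^{\omega(2-t,1,1)})$ time. The only inessential differences are notational (your $M$ is the paper's $H$, your threshold is $\ge n^t$ versus the paper's $>n^t$).
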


Using the algorithm of \cite{GU18} for fast rectangular matrix multiplication,  we are able to upper bound $\Tilde{O}(n^{2+t} +n^{\omega(2-t,1,1)})$ by $\tilde O(n^{2.658044})$; see Appendix~\ref{app:run-time} for a detailed explanation.
Thus, using our reduction from Theorem~\ref{thm:main} together with the algorithm of Theorem~\ref{thm:restricted}, we obtain a new algorithm for $\{-1,0,1\}$-APSP whose cost is $O(n^{2.658044})$ time. 
Notice that using fast squared matrix multiplication instead of fast rectangular matrix multiplication, the time bound becomes $\tilde O(n^{\frac{3+\omega}{2}})=\tilde O(n^{2.686})$, which matches the runtime  of~\cite{DP09} for solving \minmax{}.

We remark that Zwick's algorithm~\cite{zwick02} is faster and has a time cost of $\tilde O(n^{2.5286})$.
Nevertheless, if $\omega=2$ then the time costs of our algorithm, Zwick's algorithm~\cite{zwick02} and the algorithm of Alon et al.~\cite{AGM97} all become $\tilde O(n^{2.5})$.

%

\section{Preliminaries}

\paragraph{Hops and $\delta$-regularity.}
If a path $P$ in $G$ has $\ell$ edges then $P$ is said to have $\ell$ \emph{hops}. 
Let $A^{\leq \ell}$ be a matrix where $a^{\leq \ell}_{ij}$ is the weight of the shortest path from $v_i$ to $v_j$ that has at most $\ell$ hops.

\begin{definition}[$\delta$-regularity~\cite{AGM97}]
For any positive integer $\delta$, a weighted adjacency matrix $A$ of a graph $G=(V,E,w)$ with $V=\{v_1,v_2,\ldots,v_n\}$ is \emph{$\delta$-regular} if for every pair of vertices $v_i,v_j\in V$:
(i) if $a^*_{ij} \neq -\infty$ then $a^*_{ij} = a^{\le \delta}_{ij}$,
 and
(ii) if $a^*_{ij} = -\infty$ then $a^{\le \delta}_{ij}<0$.


\end{definition}

The following Lemma was proven in~\cite{AGM97}.
\begin{lemma}[Lemma 2 in \cite{AGM97}]\label{lem:n-squared-regular}
Any weighted adjacency matrix $A \in {\{-1,0,1,\infty\}}^{n\times n}$ is $n^2$-regular.
\end{lemma}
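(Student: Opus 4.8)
The plan is to verify the two conditions of $\delta$-regularity directly for $\delta = n^2$, using the fact that all finite edge weights lie in $\{-1,0,1\}$ and that there are only $n$ vertices. Throughout, fix a pair $v_i,v_j$.

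For condition~(i), suppose $a^*_{ij}\neq-\infty$, so there is a genuine shortest path and its weight is finite. First I would argue that there is a shortest path from $v_i$ to $v_j$ that is \emph{simple} (no repeated vertex): since $a^*_{ij}$ is finite, $G$ contains no negative-weight cycle reachable from $v_i$ and co-reachable to $v_j$; any cycle on a walk from $v_i$ to $v_j$ has nonnegative weight and can be excised without increasing the total weight, so repeatedly removing cycles yields a simple path of weight at most $a^*_{ij}$, hence exactly $a^*_{ij}$. A simple path on $n$ vertices has at most $n-1\le n^2$ hops, so $a^{\le n^2}_{ij}\le a^*_{ij}$; the reverse inequality $a^{\le n^2}_{ij}\ge a^*_{ij}$ is immediate since any bounded-hop path is in particular a path. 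Therefore $a^*_{ij}=a^{\le n^2}_{ij}$.

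For condition~(ii), suppose $a^*_{ij}=-\infty$. Then there is a walk from $v_i$ to $v_j$ of arbitrarily negative weight, which forces the existence of a negative-weight cycle $Z$ that is reachable from $v_i$ and from which $v_j$ is reachable. Here I would use that edge weights are integers in $\{-1,0,1\}$: a negative cycle then has weight at most $-1$, and $Z$ can be taken simple, so it has at most $n$ edges and weight in $\{-n,\ldots,-1\}$. Build a walk $W$ from $v_i$ to $v_j$ that goes from $v_i$ to $Z$ along a simple path $P_1$, traverses $Z$ some number $r$ of times, then follows a simple path $P_2$ from $Z$ to $v_j$; the paths $P_1,P_2$ each have at most $n-1$ edges and weight at least $-(n-1)$ each, and $Z$ contributes weight at most $-r$. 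Choosing $r = 2n$ (say) gives a walk with at most $(n-1)+n\cdot(2n)+(n-1) \le n^2$ hops and weight at most $-(n-1) - 2n - (n-1) < 0$, so $a^{\le n^2}_{ij}<0$ as required. (One should double-check the hop bound; being generous, $2n-2 + 2n^2 \le n^2$ fails for small $n$, so I would instead take $r$ just large enough to make the weight negative — e.g. $r=3$ cycle traversals already give weight $\le 3(n-1)-3<0$... this needs care, see below — and the hop count $2(n-1)+3n \le n^2$ holds for $n\ge 5$, with small $n$ handled trivially.)

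The main obstacle is the bookkeeping in condition~(ii): one must simultaneously keep the total number of hops at most $n^2$ and the total weight strictly negative, and the "prefix" and "suffix" simple paths $P_1,P_2$ can each contribute up to $+(n-1)$ to the weight while the cycle only contributes $-1$ per traversal (and costs up to $n$ hops per traversal). The clean way to handle this is: since $Z$ is simple with at least one edge and negative weight, it has between $1$ and $n$ edges; traversing it $r := n$ times adds $r$ copies, weight $\le -n$, cancelling the $\le 2(n-1)$ from $P_1,P_2$ only after... — actually $r = 2n$ traversals give cycle-weight $\le -2n < -2(n-1)$, total weight $<0$, and hops $\le 2(n-1) + 2n\cdot n = 2n^2+2n-2$, which exceeds $n^2$. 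So the honest fix is to note $P_1,P_2$ need not be edge-disjoint from a shortest such configuration and to bound more carefully, or simply to invoke that it suffices to reach \emph{some} vertex on a negative cycle and return — the cleanest route is: take a simple negative cycle $Z$ through some vertex $u$, a simple $v_i\!\to\!u$ path and simple $u\!\to\!v_j$ path; the concatenation with $Z$ inserted $k$ times has $\le 2n + kn$ edges and weight $\le 2n - k$; pick the smallest $k$ with $2n-k<0$, i.e. $k = 2n+1$, giving $\le 2n+(2n+1)n = 2n^2+3n$ hops — still too many. This indicates the bound $n^2$ in the lemma is being used with a sharper argument, likely that the prefix/suffix can be absorbed: after going around $Z$ once we are back at $u$, and a shortest $v_i\!\to\!v_j$ walk of \emph{at most} $n^2$ hops can be analyzed by the standard fact that if no such bounded walk is negative then the negative cycle would have to be "far", contradicting $n$-vertex reachability. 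I would therefore reorganize condition~(ii) around this reachability/pigeonhole argument rather than an explicit walk construction, and cite or reprove the standard Bellman–Ford-style observation that a negative cycle reachable-and-co-reachable within an $n$-vertex graph already forces a negative walk of at most $\mathrm{poly}(n)$ — and in fact at most $n^2$ — hops.
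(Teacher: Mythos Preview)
The paper does not prove this lemma; it simply quotes it as Lemma~2 of~\cite{AGM97}. So there is no in-paper argument to compare against, and any self-contained proof you produce is already more than what appears here. Your treatment of condition~(i) is correct and standard.

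For condition~(ii) you have the right skeleton --- simple prefix $P_1$, repeated traversals of a simple negative cycle $Z$, simple suffix $P_2$ --- and you correctly diagnose that the crude bounds $|P_1|,|P_2|\le n-1$, $|Z|\le n$, $w(Z)\le -1$ overshoot, giving roughly $2n^2$ hops. The missing observation is that these bounds cannot all be tight simultaneously, because $P_1$, $P_2$ and $Z$ share an $n$-vertex ground set. Concretely: if $Z$ has $c$ vertices, replace $P_1$ by its prefix up to the \emph{first} vertex it meets on $Z$, and $P_2$ by its suffix from the \emph{last} $Z$-vertex. The truncated paths are simple, internally disjoint from $Z$, and therefore have at most $n-c$ edges each, hence weight at most $n-c$ each. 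Taking $k=2(n-c)+1$ traversals of $Z$ now gives negative total weight with at most $2(n-c)+\bigl(2(n-c)+1\bigr)c$ hops, and one checks directly that this quantity is at most $n^2$ for every $2\le c\le n$. (If the entry and exit vertices on $Z$ differ, the extra arc along $Z$ costs at most $c-1$ hops, but its weight is at most $\lfloor (c-1)/2\rfloor$ since the complementary arc must bring the cycle total below zero; the same computation still closes below $n^2$.) This sharpening is precisely the ``sharper argument'' you were reaching for in your last paragraph, and it removes any need to invoke a Bellman--Ford black box.
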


We emphasize that for a given $\delta$-regular matrix $A$, 
the entries of $A^{\le \ell}$ are not necessarily the same as $A^*$.
Specifically, since the shortest path from $v_i$ to $v_j$ in $G$ could have more than $\ell$ hops, it is possible that $a^{\le \ell}_{ij} \gg a^*_{ij}$.
However, if $A$ is a $\delta$-regular matrix, then  if there exists a shortest path in $G$ between a pair of vertices $u$ and $v$ that does not contain a negative cycle, then there exists a shortest path in $G$ between $u$ and $v$ with at most $\delta$ hops.


\section{Reducing $\{-1,0,1\}$-APSP to \tminmaxShortcut{}}\label{sec:reduction}

\subsection{Canonical Graphs.}
We start with the following definition of a canonical graph which is implicit in~\cite{AGM97}:

\begin{definition} [Canonical graph]
 Let  $G=(V,E,w)$ be a \weightdirect{}  graph with $w:E\rightarrow \{-1,0,1\}$. A  \weightdirect{} graph $G'=(V,E', w')$ is a \canonical{} graph of $G$ if $w':E'\rightarrow \{-1,0,1\}$
 and for any shortest path $P$ from $v_i$ to $v_j$ in $G$  there exists a shortest path $P'$ from $v_i$ to $v_j$ in $G'$ that satisfies the following conditions:
 \begin{enumerate}
     \item $w(P) = w'(P')$.
    \item If $P$ has exactly $\ell$ hops then $P'$ has at most $\ell$ hops.
    \item If $P'$ is not a single edge then $P'$ does
    not contain zero weight edges.
 \end{enumerate}
\end{definition}
Notice that due to the first condition in the definition of canonical graphs,  the distance matrices of a graph $G$ and its canonical graph $G_c$ are the same.
The following lemma, which is proven in~\cite{AGM97}, states that canonical graphs can be efficiently constructed.

\begin{lemma}[Lemma 5 in \cite{AGM97}]\label{lem:\canonical-costruction}
There exists an algorithm which constructs a \canonical{} graph of a \weightdirect{} graph $G=(V,E,w)$ where $w:E\rightarrow \{-1,0,1\}$ in $O(n^\omega)$.
\end{lemma}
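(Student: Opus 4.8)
The plan is to obtain $G'$ from $G$ by ``absorbing'' maximal runs of zero-weight edges into adjacent $\pm 1$-weight edges, so that every multi-edge shortest path of $G'$ consists purely of $\pm 1$ edges while its hop count is no larger than that of the path it represents.

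First I would consider the spanning subgraph $G_0$ of $G$ whose edges are exactly the zero-weight edges of $G$, and compute its reflexive-transitive closure $R_0\subseteq V\times V$, meaning $(u,v)\in R_0$ iff some (possibly empty) directed walk goes from $u$ to $v$ in $G_0$. This takes $O(n^\omega)$ time by the standard reduction of transitive closure to Boolean matrix multiplication. I then define $E'$ by two rules: (i) for every ordered pair $u\neq v$ with $(u,v)\in R_0$, add a zero-weight edge $(u,v)$; and (ii) for every $c\in\{-1,+1\}$, add an edge $(u,v)$ of weight $c$ whenever there exist $u',v'$ with $(u,u')\in R_0$, $(u',v')\in E$, $w(u',v')=c$, and $(v',v)\in R_0$. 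The pairs produced by rule (ii) are precisely the support of the Boolean matrix product $R_0\cdot M_c\cdot R_0$, where $M_c$ is the $0/1$ matrix of the weight-$c$ edges of $G$, so $E'$ is built with $O(1)$ Boolean matrix multiplications, hence in $O(n^\omega)$ time overall; clearly $w':E'\to\{-1,0,1\}$.

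It then remains to check that $G'$ is canonical. Every edge of $G'$ encodes, by construction, a directed walk in $G$ of the same weight (a zero walk in case (i); a zero walk, then one $\pm1$ edge, then a zero walk, in case (ii)), so any walk in $G'$ projects to a walk in $G$ of equal weight, giving $d_{G'}(v_i,v_j)\ge d_G(v_i,v_j)$ for all $i,j$. Conversely, given a shortest path $P$ from $v_i$ to $v_j$ in $G$, decompose it as $P=Z_0\,e_1\,Z_1\cdots e_k\,Z_k$ with each $e_t$ a $\pm1$ edge and each $Z_t$ a (possibly empty) maximal run of zero edges, and set $\ell=k+\sum_{t=0}^{k}|Z_t|$, its number of hops. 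If $k=0$, then $w(P)=0$, and I take $P'$ to be the empty path when $v_i=v_j$ and the single zero edge $(v_i,v_j)\in E'$ otherwise (it lies in $E'$ since $(v_i,v_j)\in R_0$). If $k\ge1$, I replace, for each $1\le t<k$, the block $Z_{t-1}e_t$ by a single $G'$-edge $f_t$ from the first vertex of $Z_{t-1}$ to the head of $e_t$ (present by rule (ii), using $v=v'$), and I replace the last block $Z_{k-1}e_kZ_k$ by a single $G'$-edge $f_k$ from the first vertex of $Z_{k-1}$ to $v_j$ (again present by rule (ii)). Then $P'=f_1\cdots f_k$ is a path from $v_i$ to $v_j$ in $G'$ with $w'(P')=\sum_{t=1}^{k}w(e_t)=w(P)$, with exactly $k\le\ell$ hops, and -- when $k\ge2$ -- using only $\pm1$ edges; this establishes conditions 2 and 3, and condition 1 follows because $w'(P')=w(P)=d_G(v_i,v_j)\le d_{G'}(v_i,v_j)$ forces $P'$ to be a shortest path in $G'$.

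The part that needs the most care is making conditions 2 and 3 coexist: the obvious move of leaving the trailing zero run $Z_k$ as its own zero edge of $P'$ would place a zero edge inside a multi-edge path and violate condition 3, whereas folding $Z_k$ into the preceding $\pm1$ edge $e_k$ avoids this without ever increasing the hop count, since $|Z_k|$ edges get swallowed into one super-edge. One also has to confirm that rule (ii) creates no spurious shortcuts -- exactly the walk-projection argument above -- and to recall that transitive closure fits within the Boolean-matrix-multiplication bound $O(n^\omega)$. Everything else is routine bookkeeping.
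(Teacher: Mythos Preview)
The paper does not give its own proof of this lemma; it simply cites it as Lemma~5 of~\cite{AGM97}. Your construction is essentially the one in~\cite{AGM97}: compute the transitive closure $R_0$ of the zero-weight subgraph and then absorb every maximal zero-run into an adjacent $\pm1$ edge via the Boolean products $R_0\,M_c\,R_0$. Your verification of the three canonical conditions and of the $O(n^\omega)$ running time is correct.

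One small point worth making explicit: rules~(i) and~(ii) can try to assign several different weights to the same ordered pair $(u,v)$ (for instance $0$ from rule~(i) and $-1$ from rule~(ii), or both $+1$ and $-1$ from rule~(ii)), so as stated $w'$ is not yet a function $E'\to\{-1,0,1\}$. The natural fix is to set $w'(u,v)$ to the minimum of the applicable weights. Your walk-projection argument still gives $d_{G'}\ge d_G$, and in the construction of $P'$ each edge $f_t$ must then actually carry weight $w(e_t)$: a strictly smaller weight on $f_t$ would encode a $G$-walk from the tail of $f_t$ to its head that is shorter than the corresponding segment of $P$, producing a $v_i$--$v_j$ walk in $G$ of weight below $d_G(v_i,v_j)$, which is impossible when that distance is finite (and the canonical conditions are vacuous when it is $\pm\infty$). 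This is routine, but since the codomain of $w'$ is part of the statement it deserves one sentence.
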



\subsection{The Reduction Algorithm} \label{alg:APSP}
We prove Theorem~\ref{thm:main} by presenting
algorithm  \AlgoAPSPtoMINMAX{A}{\delta} which solves the $\{-1,0,1\}$-APSP problem on a graph whose weighted adjacency matrix $A$ is guaranteed to be $\delta$-regular, for an integer $\delta>0$.
Notice that, by Lemma~\ref{lem:n-squared-regular}, if $A$ is the weighted adjacency matrix of $G$ then   $A$ is $n^2$-regular, and therefore \AlgoAPSPtoMINMAX{A}{n^2} solves $\{-1,0,1\}$-APSP on $G$.

We now turn to  describe \AlgoAPSPtoMINMAX{A}{\delta}. (See Algorithm~\ref{palg:main} for a pseudocode).
The input is an integer value $\delta>0$ and a $\delta$-regular weighted adjacency matrix $A$.
Let  $G=(V,E,w)$ be the graph represented by weighted adjacency matrix $A$.
If $\delta=1$, and so $A$ is $1$-regular, the following lemma allows to compute $A^*$ in $O(n^\omega)$ time.

\begin{lemma}[Corollary of Lemmas 3 and 4 from~\cite{AGM97}]\label{lem:one-regular}
There exists an algorithm which solves the $\{-1,0,1\}$-APSP problem on graphs whose weighted adjacency matrix is 1-regular in $O(n^\omega)$ time.
\end{lemma}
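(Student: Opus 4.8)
The plan is to exploit how restrictive $1$-regularity is: almost every entry of $A^*$ can be read directly off $A$, and the only real work is to locate the entries equal to $-\infty$. By condition (i) of $1$-regularity, for every $i\neq j$ with $a^*_{ij}\neq -\infty$ we have $a^*_{ij}=a^{\leq 1}_{ij}=a_{ij}$ (with $a_{ij}=\infty$ when there is no edge $v_i\to v_j$, in which case $v_j$ is simply unreachable from $v_i$). Also $a^*_{ii}\leq 0$, so $a^*_{ii}\in\{0,-\infty\}$. Hence the whole problem reduces to determining, in $O(n^\omega)$ time, the set of pairs $(i,j)$ with $a^*_{ij}=-\infty$.

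The first step is to characterize the vertices $v_k$ with $a^*_{kk}=-\infty$; call this set $N$. I claim $N$ is exactly the set of vertices carrying a self-loop of weight $-1$. If $v_k$ has such a self-loop then $a^{\leq 2}_{kk}\leq -2<-1=a^{\leq 1}_{kk}$, and since $a^*_{kk}\leq a^{\leq 2}_{kk}<a^{\leq 1}_{kk}$, condition (i) cannot hold with $a^*_{kk}$ finite, so $a^*_{kk}=-\infty$. Conversely, if $a^*_{kk}=-\infty$ then condition (ii) gives $a^{\leq 1}_{kk}<0$; but a walk of at most one hop from $v_k$ to $v_k$ is either the trivial walk (weight $0$) or a self-loop, so $v_k$ must carry a self-loop of weight $-1$. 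Consequently $N$ is obtained in $O(n)$ time by reading the diagonal of $A$.

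Using $N$, the characterization of the $-\infty$ entries is now standard shortest-path theory: $a^*_{ij}=-\infty$ if and only if there is a vertex $c$ with $v_i$ reaching $c$, $c$ reaching $v_j$, and $a^*_{cc}=-\infty$; by the previous paragraph this is equivalent to the existence of some $k\in N$ such that $v_i$ reaches $v_k$ and $v_k$ reaches $v_j$. The algorithm is then: compute the reflexive transitive closure $R\in\{0,1\}^{n\times n}$ of $G$ in $O(n^\omega)$ time via fast Boolean matrix multiplication; form the $n\times |N|$ submatrix $P$ of $R$ on the columns indexed by $N$ and the $|N|\times n$ submatrix $Q$ of $R$ on the rows indexed by $N$; and compute the Boolean product $PQ$ in $O(n^\omega)$ time, so that $(PQ)_{ij}=1$ exactly for the pairs with $a^*_{ij}=-\infty$. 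For every remaining pair, output $a^*_{ii}=0$ and $a^*_{ij}=a_{ij}$ for $i\neq j$, which is correct by the first paragraph. The total cost is $O(n^\omega)$.

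The only genuinely non-routine point is the correctness of the characterization of $N$ — that $1$-regularity is strong enough to force every reachable negative cycle to be witnessed locally by a negative self-loop; everything else (reducing transitive closure to Boolean matrix multiplication, reading off the finite shortest-path weights) is standard. I expect the cited Lemmas 3 and 4 of~\cite{AGM97} to supply exactly these two ingredients: detection of the $-\infty$ (reachable-negative-cycle) entries and recovery of the finite entries from a $1$-regular matrix.
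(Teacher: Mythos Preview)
The paper does not actually prove this lemma: it is stated without proof as a corollary of Lemmas~3 and~4 in~\cite{AGM97}, so there is no in-paper argument to compare against. Your proposal is a correct, self-contained reconstruction of the result. The key observation---that under $1$-regularity every finite entry of $A^*$ is already present in $A$, and that $a^*_{kk}=-\infty$ iff $v_k$ carries a $-1$ self-loop (via condition~(ii))---is sound, and reducing the detection of $-\infty$ entries to transitive closure plus one Boolean product is exactly the kind of argument the cited lemmas encapsulate. One small remark: the $O(n^\omega)$ bound for transitive closure (rather than the naive $O(n^\omega\log n)$ from repeated squaring) relies on the classical reduction of transitive closure to a single Boolean matrix multiplication; you use this implicitly and it is worth citing if you want the argument to be fully self-contained.
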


Thus, suppose $\delta >1$.
The algorithm begins by constructing a \canonical{}
graph $G_c=(V,E_c,w_c)$ for $G$. Let $C$ be the weighted adjacency matrix of $G_c$ and let $T=\lceil C^{\leq2}/2 \rceil$. Notice that $T\in {\{-1,0,1,\infty\}}^{n\times n}$ and computing $T$ costs $O(n^\omega)$ time.

It was shown in~\cite{AGM97} that  $T$ is a $\lceil\delta/2\rceil$-regular weighted adjacency matrix and that the distance matrix $T^*$ of the graph represented by $T$ is $\lceil C^*/2 \rceil$.
Next, the algorithm performs a recursive call
\AlgoAPSPtoMINMAX{T}{\lceil\delta/2\rceil}, to compute  $T^*$.
From the definition of \canonical{} graphs it follows that $C^*=A^*$. Hence,  $T^*=\lceil C^*/2 \rceil=\lceil A^*/2 \rceil$.

\begin{algorithm}[t]
\caption{\AlgoAPSPtoMINMAX{A}{\delta}}
\begin{algorithmic}[1]

\IF{$\delta = 1$}\label{line:one-regular-begin}
\STATE compute the distance matrix $A^*$ using Lemma \ref{lem:one-regular}
\RETURN $A^*$
\ENDIF \label{line:one-regular-end}

\STATE Let $C$ be the  weighted adjacency matrix of the \canonical{} graph for the graph that $A$ represents.\label{line:5}
\STATE $T^* \gets$ \AlgoAPSPtoMINMAX{\lceil C^{\leq2}/2 \rceil}{\lceil \delta/2\rceil}.\label{line:6}

\STATE $x^{-}_{ij} \gets
  \begin{cases}
        -\infty, & \text{if}\ c_{ij} =-1\\
        \infty , &  \text{otherwise}
    \end{cases}$
\newline $x^{+}_{ij} \gets
  \begin{cases}
        -\infty, & \text{if}\ c_{ij} =+1\\
        \infty , &  \text{otherwise}
    \end{cases}$ \label{line:7}

\STATE Let $M$ be a matrix where $m_{ij}=t^*_{ij}-1$
\newline $Z^{-}\gets T^*\ovee_{T^*} X^{-}$
\newline $Z^{+}\gets T^*\ovee_{M} X^{+}$ \label{line:8}

\STATE $ a^*_{ij}\gets
    \begin{cases}
      t^*_{ij}, & \text{if}\ t^{*}_{ij} = \pm \infty \\
      2t^*_{ij}-1, & \text{if}\ z^{+}_{ij} = 1 \text{ or } z^{-}_{ij} = 1\\
      2t^*_{ij}, & \text{otherwise}
    \end{cases}$
    \label{line:9}

\RETURN $A^*$

\end{algorithmic}
\label{palg:main}

\end{algorithm}

Consider a pair of vertices $v_i$ and $v_j$.
Since $t^*_{ij} = \lceil a^*_{ij}/2 \rceil$ it follows that if  $a^*_{ij}$ is even then $2 t^*_{ij}=a^*_{ij}$ and if $a^*_{ij}$ is odd then $2 t^*_{ij} = a^*_{ij} + 1$.
Therefore, given the value of $t^*_{ij}$ and the parity of $a^*_{ij}$, the algorithm is able to compute the value of $a^*_{ij}$.
Since $T^*$ is computed by the recursive call, the only remaining task in order to compute $A^*$ is to compute the parity of $a^*_{ij}$ for every $i,j\in [n]$.






\paragraph{Establishing parity.}
We now describe how to determine the parity of the entries of $A^*$ without direct access to $A^*$, but with access to $T^*$ and $A$.  Notice that this is where our algorithm differs from  the algorithm of~\cite{AGM97}.

Let
\begin{equation*}
x^{+}_{ij} =
  \begin{cases}
        -\infty, & \text{if}\ c_{ij} =+1\\
        \infty , &  \text{otherwise}
        \end{cases}
\end{equation*}
\begin{equation*}
x^{-}_{ij} =
  \begin{cases}
        -\infty, & \text{if}\ c_{ij} =-1\\
        \infty , &  \text{otherwise}
        \end{cases}
\end{equation*}
and define a matrix $M$ where $m_{ij} = t^*_{ij}-1$.

The algorithm computes
\begin{align*}
Z^{+}& =T^*\ovee_{M} X^{+} \\
Z^{-}& =T^*\ovee_{T^*} X^{-}.
\end{align*}

The following three lemmas demonstrate the connection between both $Z^+$ and $Z^-$, and the parity of entries in $A^*$.
For the proofs of these lemmas, notice that $t^*_{ij} \ne \infty$ if and only if there exists some path from $v_i$ to $v_j$ in $G_c$ (and so, by the definition of a canonical graph, there must exists a path from $v_i$ to $v_j$ in $G$). The following observation is due to $m_{ij} = t^*_{ij}-1$.

\begin{observation}
$t^*_{ij} = \pm\infty$ if and only if $c^*_{ij}=a^*_{ij}=m_{ij}=t^*_{ij}$.
\end{observation}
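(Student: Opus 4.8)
The plan is to read off both directions from two facts already available in the excerpt: that the canonical graph preserves distances, so $c^*_{ij}=a^*_{ij}$ for every $i,j$; and that $t^*_{ij}=\lceil a^*_{ij}/2\rceil$, which follows from $T^*=\lceil C^*/2\rceil$ together with $C^*=A^*$. I would also fix at the outset the usual arithmetic conventions for the symbols $\pm\infty$ that are implicitly in force throughout the paper, namely $\lceil(+\infty)/2\rceil=+\infty$, $\lceil(-\infty)/2\rceil=-\infty$, $(\pm\infty)-1=\pm\infty$, and that no finite real $x$ satisfies $x=x-1$. These three facts, plus the defining relation $m_{ij}=t^*_{ij}-1$, are all that the argument uses.

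For the forward direction, suppose $t^*_{ij}=+\infty$. Since $t^*_{ij}=\lceil a^*_{ij}/2\rceil$ and the only quantity whose half rounds up to $+\infty$ is $+\infty$ itself, we get $a^*_{ij}=+\infty$, hence $c^*_{ij}=a^*_{ij}=+\infty=t^*_{ij}$, and finally $m_{ij}=t^*_{ij}-1=+\infty=t^*_{ij}$. The case $t^*_{ij}=-\infty$ is symmetric: it forces $a^*_{ij}=-\infty$, so $c^*_{ij}=a^*_{ij}=-\infty=t^*_{ij}$ and $m_{ij}=t^*_{ij}-1=-\infty=t^*_{ij}$. In either case all four quantities coincide, as claimed.

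For the backward direction, assume $c^*_{ij}=a^*_{ij}=m_{ij}=t^*_{ij}$. In particular $m_{ij}=t^*_{ij}$, while by definition $m_{ij}=t^*_{ij}-1$, so $t^*_{ij}=t^*_{ij}-1$. Under the conventions above this equality holds only when $t^*_{ij}\in\{+\infty,-\infty\}$, which is exactly the desired conclusion.

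I do not expect a real obstacle here: the statement is essentially bookkeeping built on the relation $m_{ij}=t^*_{ij}-1$ and the identity $c^*_{ij}=a^*_{ij}$. The only point that needs care is being explicit about the $\pm\infty$ conventions, so that both "$t^*_{ij}=t^*_{ij}-1$ implies $t^*_{ij}=\pm\infty$" and "$\lceil a^*_{ij}/2\rceil=\pm\infty$ implies $a^*_{ij}=\pm\infty$" are formally justified rather than left to the reader.
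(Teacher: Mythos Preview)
Your argument is correct and matches the paper's own justification: the paper does not give a formal proof of this observation but simply prefaces it with ``The following observation is due to $m_{ij}=t^*_{ij}-1$,'' which is exactly the hinge of your backward direction, and the forward direction uses only the identities $c^*_{ij}=a^*_{ij}$ and $t^*_{ij}=\lceil a^*_{ij}/2\rceil$ already recorded in the surrounding text. Your only addition is making the $\pm\infty$ arithmetic conventions explicit, which is harmless and arguably clarifying.
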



\begin{lemma}\label{lem:plus-odd}

Let $v_i,v_j\in V$ and assume that $m_{ij}\ne \infty$, then $z^{+}_{ij}=1$ if and only if there exists a shortest path $P$ in $G_c$ from $v_i$ to $v_j$ such that the weight of the last edge of $P$ is $1$ and $a^*_{ij}$ is odd.
\end{lemma}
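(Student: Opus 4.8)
The plan is to unpack the definition of $z^+_{ij} = (T^* \ovee_M X^+)_{ij}$ and characterize when it equals $1$. By Definition~\ref{def:t-minmax}, $z^+_{ij} = 1$ if and only if $m_{ij} = \min_{k=1}^n \{\max(t^*_{ik}, x^+_{kj})\}$, where recall $m_{ij} = t^*_{ij} - 1$ and $x^+_{kj} \in \{-\infty, \infty\}$ with $x^+_{kj} = -\infty$ precisely when $c_{kj} = +1$ (i.e.\ when $(v_k,v_j)$ is an edge of weight $+1$ in the canonical graph $G_c$). First I would observe that $\max(t^*_{ik}, x^+_{kj})$ is either $t^*_{ik}$ (when $c_{kj} = +1$, since then $x^+_{kj} = -\infty$) or $+\infty$ (otherwise). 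Hence $\min_k \{\max(t^*_{ik}, x^+_{kj})\} = \min\{ t^*_{ik} : c_{kj} = +1\}$, the minimum being over all in-neighbors $v_k$ of $v_j$ via a weight-$+1$ edge (and $+\infty$ if there are none). So $z^+_{ij} = 1$ iff there is such a $v_k$ with $t^*_{ik} = t^*_{ij} - 1$, and no such $v_k$ with $t^*_{ik} < t^*_{ij} - 1$.

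Next I would translate this into a statement about shortest paths in $G_c$, using the fact (stated in the excerpt) that $T^* = \lceil C^*/2 \rceil$ and $C^* = A^*$, together with the key recursive identity $t^*_{ij} = \lceil a^*_{ij}/2 \rceil$. For the forward direction: suppose $z^+_{ij} = 1$, witnessed by an in-neighbor $v_k$ with a weight-$+1$ edge $(v_k, v_j)$ and $t^*_{ik} = t^*_{ij} - 1$. Then the path consisting of a shortest $v_i$-to-$v_k$ path in $G_c$ followed by the edge $(v_k,v_j)$ has weight $c^*_{ik} + 1 = a^*_{ik} + 1$. I would argue that $t^*_{ik} = t^*_{ij} - 1$ forces $a^*_{ik} + 1 = a^*_{ij}$ with $a^*_{ij}$ odd: indeed, since $t^*_{ij} - 1 = \lceil a^*_{ik}/2\rceil$ and $t^*_{ij} = \lceil a^*_{ij}/2 \rceil$, and since appending one edge of weight $+1$ gives a $v_i$-$v_j$ path of weight $a^*_{ik}+1 \ge a^*_{ij}$, comparing the ceilings pins down $a^*_{ij} = a^*_{ik} + 1$ and that this common path weight witnesses a shortest path whose last edge has weight $1$; parity of $a^*_{ij}$ being odd then follows because $\lceil (a^*_{ik}+1)/2 \rceil = \lceil a^*_{ik}/2\rceil + 1$ holds exactly when $a^*_{ik}$ is even, i.e.\ $a^*_{ik}+1 = a^*_{ij}$ is odd. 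Conversely, if such a shortest path $P$ exists with last edge of weight $1$ and $a^*_{ij}$ odd, then deleting the last edge gives a shortest $v_i$-$v_k$ path of weight $a^*_{ij} - 1$ (it must be shortest, else $P$ would not be), so $a^*_{ik} = a^*_{ij} - 1$ is even, whence $t^*_{ik} = \lceil (a^*_{ij}-1)/2\rceil = (a^*_{ij}-1)/2 = \lceil a^*_{ij}/2 \rceil - 1 = t^*_{ij} - 1 = m_{ij}$, and minimality over all weight-$+1$ in-neighbors holds since any such neighbor $v_{k'}$ gives a $v_i$-$v_j$ path of weight $a^*_{ik'} + 1 \ge a^*_{ij}$, so $a^*_{ik'} \ge a^*_{ij} - 1$ and $t^*_{ik'} \ge t^*_{ij} - 1 = m_{ij}$; hence $z^+_{ij} = 1$.

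The main obstacle I anticipate is the careful ceiling arithmetic tying $t^*$-values to $a^*$-values and parity — in particular making sure the equality $m_{ij} = \min_k\{\ldots\}$ is equivalent to the \emph{existence} of the right kind of shortest path, and not merely to some walk, which is where $\delta$-regularity and the canonical-graph structure (condition that non-single-edge shortest paths avoid zero-weight edges, and that hop counts do not increase) are needed to ensure that a minimizing $v_k$ actually corresponds to a genuine shortest path ending in a weight-$+1$ edge. I would also need to handle the degenerate interaction with the hypothesis $m_{ij} \ne \infty$ (equivalently $t^*_{ij} \ne \infty$, so $v_j$ is reachable from $v_i$ in $G_c$) to rule out spurious $\pm\infty$ bookkeeping, invoking the preceding Observation. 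Everything else is a routine unwinding of Definition~\ref{def:t-minmax}.
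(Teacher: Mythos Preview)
Your proposal is correct and follows essentially the same route as the paper's proof: both reduce $\min_k\{\max(t^*_{ik},x^+_{kj})\}$ to $\min\{t^*_{ik}: c_{kj}=+1\}$, and both hinge on the identity $t^*_{ij}=\lceil a^*_{ij}/2\rceil$ together with the triangle inequality $a^*_{ik}+1\ge a^*_{ij}$ to link $t^*_{ik}=m_{ij}$ with the existence of a weight-$+1$-ending shortest path and odd $a^*_{ij}$. The paper organizes this as a three-way case split (shortest path \& odd, shortest path \& even, not shortest) and shows the latter two force $t^*_{ik}>m_{ij}$, whereas you argue the two directions directly via ceiling arithmetic; the content is the same.

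One minor remark: the obstacles you anticipate in your last paragraph are largely phantom for this particular lemma. Neither $\delta$-regularity nor the zero-weight-edge property of canonical graphs is actually needed here; the proof uses only $T^*=\lceil C^*/2\rceil$, $C^*=A^*$, and elementary triangle-inequality/ceiling reasoning. The canonical-graph property about zero-weight edges is invoked only later, in Lemma~\ref{lem:0-edge}.
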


\begin{proof}

The proof is a case analysis showing that the only case in which $z^{+}_{ij}= 1$ is when there exists a shortest path from $v_i$ to $v_j$ whose last edge has weight 1 and $a^*_{ij}$ is odd, and that in such a case, it must be that $z^{+}_{ij}= 1$.

Notice that, since $m_{ij} \ne \infty$ and by the definition of $X^+$, if there exists a path from $v_i$ to $v_j$ in $G_c$ whose last edge has weight $1$, then there exists a value $\hat k\in [n]$ such that $t^*_{i\hat k} = \min_{k=1}^{n} \{\max(t^*_{ik}, x^+_{kj})\}\ne \infty$.
Moreover, since $t^*_{i\hat k} \ne \infty$, there exists a path from $v_i$ to $v_j$ in $G_c$ whose last edge has weight $1$ and the vertex preceding $v_j$ on this path is $v_{\hat k}$.
Therefore, since $C^*=A^*$ and by the triangle inequality, $2t^*_{ij}-1 \le c^*_{ij} \le c^*_{i\hat k}+1 \le 2t^*_{i\hat k} +1$, and so $m_{ij}= t^*_{ij} -1 \le t_{i\hat k} = \min_{k=1}^{n} \{\max(t^*_{ik}, x^+_{kj})\} $.
Moreover, notice that, by  Definition~\ref{def:t-minmax}, $z^{+}_{ij}= 1$ if and only if  $\min_{k=1}^{n} \{\max(t^*_{ik}, x^+_{k j})\} = m_{ij}$.


Suppose that there is no path from $v_i$ to $v_j$ whose last edge has weight 1. Then for every $1\le k \le n$ we have that either $t^*_{ik} =\infty$ or $x^+_{k j} =\infty$ (since otherwise there is a path from $v_i$ to $v_k$ in $G_c$ and $c_{kj}=1$).
Thus, $\min_{k=1}^{n} \{\max(t^*_{ik}, x^+_{k j})\} =\infty$.
Since we assumed that $m_{ij}\ne \infty$ then  it is guaranteed that in this case $z^{+}_{ij}= 0$.

\begin{figure}
\centering
\includegraphics[width=0.65\textwidth]{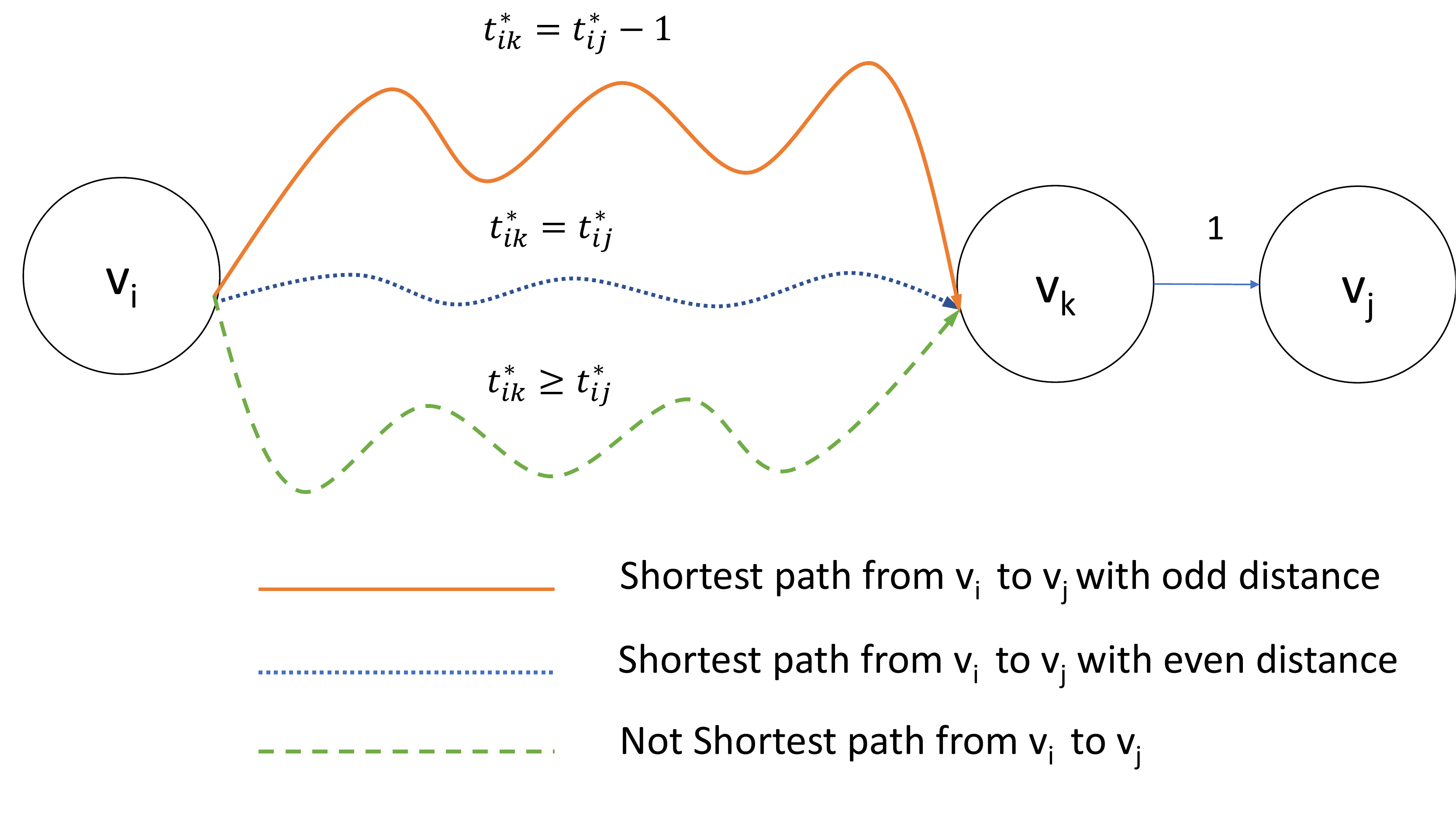}
\caption{Caption goes here.}\label{fig:example-weight-1}
\end{figure}

Thus, for the rest of the proof we assume that there exists a path $P$ from $v_i$ to $v_j$ in $G_c$ whose last edge is $(v_k,v_j)$, $c_{kj}= 1$, and without loss of generality the prefix of $P$ from $v_i$ to $v_k$ is a shortest path.
The following three categories cover all the possibilities for  $P$ (See Figure~\ref{fig:example-weight-1}):
\begin{enumerate*}[label=\roman*,align=CenterWithParen]
    \item $P$ is a shortest path and $a^*_{ij}$ is odd.
    \item $P$ is a shortest path and $a^*_{ij}$ is even.
    \item $P$ is not a shortest path.
\end{enumerate*}

\paragraph{First category.}
Consider  the case where the  weight of the last edge $(v_k,v_j)$ of $P$ is $1$ and $a^*_{ij}$ is odd.
Since $a^*_{ij}$ is odd we have $a^*_{ij}=2t^*_{ij}-1$. Since $a^*_{ik}=a^*_{ij}-1$ we have that $a^*_{ik}$ is even, implying  that $a^*_{ik} = 2t^*_{ik}$. Therefore, $2t^*_{ik}= a^*_{ik}=a^*_{ij}-1=2t^*_{ij} - 2$, and
$t^*_{ik} = t^*_{ij}-1 = m_{ij}$.

\paragraph{Second category.}
We now turn to the case where the  weight of the last edge $(v_k,v_j)$ of $P$ is $1$ and $a^*_{ij}$ is even.
Since $a^*_{ij}$ is even, we have $a^*_{ij}=2t^*_{ij}$.
Since $a^*_{ik}=a^*_{ij}-1$ then $a^*_{ik}$ is odd, implying that
$a^*_{ik}=2t^*_{ik}-1$. Therefore,  $2t^*_{ik}-1 = a^*_{ik}=a^*_{ij}-1=2t^*_{ij}-1$, and $t^*_{ik} = t^*_{ij} > m_{ij}$.

\paragraph{Third category.}
Next, consider the case where $P$ is not a shortest path. Thus, $a^*_{ik} +1 > a^*_{ij}$.
Recall that  $2t^*_{ij}-1 \leq a^*_{ij}$
and $a^*_{ik}\leq 2t^*_{ik}$.
Thus, $2t^*_{ij}-1 \leq a^*_{ij} <a^*_{ik}+1 \leq 2t^*_{ik}+1$
and so $t^*_{ik} > t^*_{ij}-1 = m_{ij}$.

\paragraph{Conclusion.}
Notice that when evaluating $\min_{k=1}^{n} \{\max(t^*_{ik}, x^+_{k j})\}$, the term $\max(t^*_{ik}, x^+_{k j}) \ne \infty$ only for values of $k$ for which there exists a path from $v_i$ to $v_k$ in $G_c$ and $c_{kj}=1$.
Thus, since we assume that $m_{ij}\ne \infty$, the only values of $k$ which need to be considered are values of $k$ such that there exists a path from $v_i$ to $v_j$ whose last edge is $(v_k,v_j)$ and $c_{kj}=1$; all other values of $k$ have $\max(t^*_{ik}, x^+_{k j}) = \infty$ and are therefore not relevant.

If all such paths fall into either the second or third category, then for every such path $P$ whose last edge is $(v_k,v_j)$ we have $\max(t^*_{ik},x^+_{k j}) = t^*_{ik} > m_{ij}$, and so $\min_{k=1}^{n} \{\max(t^*_{ik}, x^+_{k j})\} > m_{ij}$ implying $z^+_{ij} = 0$.
On the other hand, if there exists at least one such path that falls into the first category, then for any such path $P$ whose last edge is $(v_k, v_j)$, we have $\max(t^*_{ik},x^+_{k j}) = t^*_{ik} = t^*_{ij} -1 = m_{ij}$.
Since we always have $m_{ij} \leq  \min_{k=1}^{n} \{\max(t^*_{ik}, x^+_{kj})\}$, it follows that
$\min_{k=1}^{n} \{\max(t^*_{ik},x^+_{k j})\} = m_{ij}$, and so $z^+_{k j} =1$.
\end{proof}

\begin{lemma}\label{lem:minus-odd}
Let $v_i,v_j\in V$ and assume that $m_{ij}\ne \infty$, then $z^{-}_{ij}=1$ if and only if there exists a shortest path $P$ from $v_i$ to $v_j$ in $G_c$ such that the weight of the last edge of $P$ is $-1$ and $a^*_{ij}$ is odd.
\end{lemma}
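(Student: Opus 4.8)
The plan is to mirror the structure of the proof of Lemma~\ref{lem:plus-odd}, but with the two adjustments that reflect the differences in how $Z^-$ is defined: the matrix $X^-$ encodes edges of weight $-1$ rather than $+1$, and the target in the $\tminmaxShortcut{}$ is $T^*$ itself rather than $M$. So I would set out to prove that the only case in which $z^-_{ij}=1$ is when there exists a shortest path from $v_i$ to $v_j$ in $G_c$ whose last edge has weight $-1$ and $a^*_{ij}$ is odd, and conversely that in such a case $z^-_{ij}=1$ is forced.

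First I would record the structural facts we can reuse verbatim: by Definition~\ref{def:t-minmax}, $z^-_{ij}=1$ if and only if $\min_{k=1}^n\{\max(t^*_{ik},x^-_{kj})\}=t^*_{ij}$; and $\max(t^*_{ik},x^-_{kj})\ne\infty$ only for those $k$ such that there is a path from $v_i$ to $v_k$ in $G_c$ with $c_{kj}=-1$ (equivalently, a path from $v_i$ to $v_j$ whose last edge is $(v_k,v_j)$ of weight $-1$). As in the previous lemma, if no such path exists then the minimum is $\infty\ne t^*_{ij}$ (using $m_{ij}\ne\infty$, equivalently $t^*_{ij}\ne\infty$), so $z^-_{ij}=0$. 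So I may assume such a path $P$ exists, whose last edge is $(v_k,v_j)$ with $c_{kj}=-1$, and (WLOG) whose prefix to $v_k$ is a shortest path. I also need the analog of the ``$\ge$'' inequality: since the last edge has weight $-1$, the triangle inequality gives $c^*_{ij}\le c^*_{ik}-1$, hence using $C^*=A^*$ and $a^*_{ik}\le 2t^*_{ik}$ and $2t^*_{ij}-1\le a^*_{ij}$ one gets $t^*_{ij}\le t^*_{ik}$, so in fact $t^*_{ij}\le\min_k\{\max(t^*_{ik},x^-_{kj})\}$ always holds — this is the $-1$-edge counterpart of the baseline inequality, and it means $z^-_{ij}=1$ iff the minimum is $\le t^*_{ij}$, i.e. iff some admissible $k$ achieves $t^*_{ik}\le t^*_{ij}$, i.e. $t^*_{ik}=t^*_{ij}$.

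Next comes the three-case analysis on $P$, exactly as before: (i) $P$ is a shortest path and $a^*_{ij}$ is odd; (ii) $P$ is a shortest path and $a^*_{ij}$ is even; (iii) $P$ is not a shortest path. Because the last edge now has weight $-1$, the relation along $P$ is $a^*_{ik}=a^*_{ij}+1$ rather than $a^*_{ij}-1$. In case (i), $a^*_{ij}$ odd gives $a^*_{ij}=2t^*_{ij}-1$, so $a^*_{ik}=2t^*_{ij}$ is even, hence $a^*_{ik}=2t^*_{ik}$, giving $t^*_{ik}=t^*_{ij}$. In case (ii), $a^*_{ij}$ even gives $a^*_{ij}=2t^*_{ij}$, so $a^*_{ik}=2t^*_{ij}+1$ is odd, hence $a^*_{ik}=2t^*_{ik}-1$, giving $t^*_{ik}=t^*_{ij}+1>t^*_{ij}$. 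In case (iii), $P$ not shortest means $a^*_{ik}-1>a^*_{ij}$, so $2t^*_{ik}\ge a^*_{ik}>a^*_{ij}+1\ge 2t^*_{ij}$, hence $t^*_{ik}>t^*_{ij}$. Thus among the admissible $k$, only those arising from a case-(i) path contribute $t^*_{ik}=t^*_{ij}$, while all others contribute $t^*_{ik}>t^*_{ij}$. Combining with the baseline inequality $t^*_{ij}\le\min_k\{\max(t^*_{ik},x^-_{kj})\}$: if every admissible path is of type (ii) or (iii) the minimum is strictly above $t^*_{ij}$ so $z^-_{ij}=0$; if at least one is of type (i) the minimum equals $t^*_{ij}$ so $z^-_{ij}=1$. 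Since a type-(i) path is precisely a shortest path whose last edge has weight $-1$ with $a^*_{ij}$ odd, this is exactly the claimed equivalence.

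I do not expect a genuine obstacle here — the argument is a near-mechanical adaptation of Lemma~\ref{lem:plus-odd}. The one place requiring care is the bookkeeping on parities: with a last edge of weight $-1$ the ``helpful'' case is still $a^*_{ij}$ odd, but now it is $a^*_{ik}$ (for the $v_k$ preceding $v_j$) that is \emph{even}, so the role of $M$ versus $T^*$ as the target flips, and one must check that comparing $\min_k\{\max(t^*_{ik},x^-_{kj})\}$ to $t^*_{ij}$ (rather than to $t^*_{ij}-1$) is what makes the case-(i) condition $t^*_{ik}=t^*_{ij}$ the threshold. A secondary point to get right is the direction of the triangle inequality for a $-1$ edge ($c^*_{ij}\le c^*_{ik}-1$, equivalently $c^*_{ik}\ge c^*_{ij}+1$) and that it yields $t^*_{ij}\le t^*_{ik}$ rather than the reverse; once that baseline inequality is in hand the rest is identical in spirit to the previous lemma, including the handling of the $m_{ij}\ne\infty$ hypothesis to rule out the all-$\infty$ degenerate case.
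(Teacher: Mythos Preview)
Your proposal is correct and follows essentially the same approach as the paper's own proof: the same baseline inequality $t^*_{ij}\le\min_k\{\max(t^*_{ik},x^-_{kj})\}$ from the triangle inequality, the same handling of the no-path case via $t^*_{ij}\ne\infty$, and the identical three-category analysis yielding $t^*_{ik}=t^*_{ij}$ in case (i) and $t^*_{ik}>t^*_{ij}$ in cases (ii) and (iii). Your commentary on where the bookkeeping differs from Lemma~\ref{lem:plus-odd} (target $T^*$ vs.\ $M$, $a^*_{ik}=a^*_{ij}+1$ vs.\ $-1$) is accurate and matches the paper's adaptation.
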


The proof of Lemma~\ref{lem:minus-odd} is similar to the proof of Lemma~\ref{lem:plus-odd}, and so the proof is deferred to Appendix~\ref{app:minus-odd}.

\begin{lemma}\label{lem:0-edge}
Let $v_i,v_j\in V$. If $a^*_{ij}$ is odd then there exists a shortest path from $v_i$ to $v_j$ in $G_c$ that has a last edge with weight either $1$ or $-1$.
\end{lemma}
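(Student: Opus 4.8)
\textbf{Proof proposal for Lemma~\ref{lem:0-edge}.}

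The plan is to argue directly about the structure of shortest paths in the canonical graph $G_c$, using the third property in the definition of a canonical graph: any shortest path $P'$ in $G_c$ that is not a single edge contains no zero-weight edges, so every edge of $P'$ has weight $+1$ or $-1$. First I would dispose of the degenerate case where the shortest path consists of a single edge: if $a^*_{ij}$ is odd then in particular $a^*_{ij}\ne 0$, so a single-edge shortest path from $v_i$ to $v_j$ must have weight $\pm 1$, and we are done. So assume every shortest path from $v_i$ to $v_j$ in $G_c$ has at least two hops.

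Next I would recall that, because $A$ (equivalently $C$) is $\delta$-regular and $a^*_{ij}\ne -\infty$ when $a^*_{ij}$ is a finite odd number, there is a shortest path from $v_i$ to $v_j$ in $G$, hence by the canonical-graph definition a shortest path $P'$ from $v_i$ to $v_j$ in $G_c$ with the same weight $a^*_{ij}$. Since we are in the case where $P'$ is not a single edge, property (3) of canonical graphs guarantees that $P'$ has no zero-weight edges, so in particular its last edge has weight $+1$ or $-1$. That is exactly the claim.

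The main thing to be careful about — and the only real content beyond invoking the definitions — is making sure the case split is exhaustive and that we are entitled to pass from ``$a^*_{ij}$ is odd'' to ``there is a genuine shortest path in $G$ (not an infinite-weight situation)''. If $a^*_{ij}$ is an odd integer it is finite, so it is neither $+\infty$ nor $-\infty$; a finite shortest-path weight means some walk of minimum weight exists, and $\delta$-regularity (together with the remark after the definition that a negative-cycle-free shortest path of at most $\delta$ hops then exists) ensures a bona fide shortest path realizing $a^*_{ij}$. Passing this path through the canonical-graph guarantee then finishes the argument. I expect the single-edge sub-case to be the only subtlety worth spelling out explicitly, since it is the one place where property (3) does not apply and one must instead use oddness to rule out a weight-$0$ single edge.
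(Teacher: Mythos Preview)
Your proposal is correct and follows essentially the same line as the paper's proof: both arguments hinge on property~(3) of the canonical graph---that any non-single-edge shortest path in $G_c$ avoids zero-weight edges---together with the observation that a single-edge shortest path of odd weight must have weight $\pm 1$. The paper phrases this by contradiction (assume every shortest path ends in a weight-$0$ edge, then the canonical path must be a single edge of weight $0$, so $a^*_{ij}=0$), whereas you phrase it as a direct case split; your added care about the existence of a shortest path in $G$ via finiteness of $a^*_{ij}$ is a point the paper leaves implicit.
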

\begin{proof}
Assume by contradiction that $a^*_{ij}$ is odd, but the last edge in every shortest path from $v_i$ to $v_j$  has weight 0.
Since $G_c$ is a \canonical{} graph, for every pair of vertices $v_i$ and $v_j$,
there must exist a shortest path $P\in G_c$ from $v_i$ to $v_j$ that either has only one edge or does not contain any edges with weight $0$.
Thus, if all of the shortest paths from $v_i$ to $v_j$ have a last edge of weight $0$, then $a^*_{ij}=0$, and so $a^*_{ij}$ is even, which is a contradiction.
\end{proof}
By Lemma \ref{lem:0-edge}, for every pair of vertices $v_i$ and $v_j$, $a^*_{ij}$ is odd if and only if there exists a shortest path from $v_i$ to $v_j$ in $G_c$ such that the weight of the last edge of the path is $1$ or $-1$. Moreover, by Lemmas~\ref{lem:plus-odd} and~\ref{lem:minus-odd}, if the weight of the last edge of a shortest path from $v_i$ to $v_j$ in $G_c$ is $1$ then $z^+_{ij} = 1$, and if the weight of the last edge of a shortest path from $v_i$ to $v_j$ in $G_c$ is $-1$ then $z^-_{ij} = 1$.

\begin{corollary}\label{cor:a-is-even}
Let $v_i, v_j\in V$. Then $a^*_{ij}$ is odd if and only if either $z^{-}_{ij}=1$ or $z^{+}_{ij}=1$.
\end{corollary}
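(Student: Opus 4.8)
The plan is to derive the corollary directly from Lemmas~\ref{lem:0-edge},~\ref{lem:plus-odd}, and~\ref{lem:minus-odd}, after first disposing of the degenerate case in which $t^*_{ij}$ is infinite. If $t^*_{ij} = \pm\infty$, then by the Observation preceding Lemma~\ref{lem:plus-odd} we also have $a^*_{ij} = t^*_{ij} = \pm\infty$, so $a^*_{ij}$ is not odd; moreover $m_{ij} \ne \infty$ is precisely the regime covered by the hypotheses of Lemmas~\ref{lem:plus-odd} and~\ref{lem:minus-odd}, and it is the only case in which the algorithm ever appeals to the parity of $a^*_{ij}$ (line~\ref{line:9} of Algorithm~\ref{palg:main} treats the infinite case separately). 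Hence for the rest of the proof I assume $m_{ij} = t^*_{ij}-1 \ne \infty$.

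For the forward direction, suppose $a^*_{ij}$ is odd. By Lemma~\ref{lem:0-edge} there exists a shortest path $P$ from $v_i$ to $v_j$ in $G_c$ whose last edge has weight $1$ or $-1$. If that weight is $1$, then $P$ witnesses the right-hand side of the equivalence in Lemma~\ref{lem:plus-odd} (a shortest path from $v_i$ to $v_j$ whose last edge has weight $1$, together with $a^*_{ij}$ odd), so $z^{+}_{ij} = 1$. If that weight is $-1$, then $P$ witnesses the right-hand side of Lemma~\ref{lem:minus-odd}, so $z^{-}_{ij} = 1$. Either way, $z^{+}_{ij} = 1$ or $z^{-}_{ij} = 1$.

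For the reverse direction, suppose $z^{+}_{ij} = 1$ or $z^{-}_{ij} = 1$. If $z^{+}_{ij} = 1$, Lemma~\ref{lem:plus-odd} yields (among other things) that $a^*_{ij}$ is odd; symmetrically, if $z^{-}_{ij} = 1$, Lemma~\ref{lem:minus-odd} yields that $a^*_{ij}$ is odd. In both cases $a^*_{ij}$ is odd, which completes the equivalence.

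The core argument is thus essentially a one-line combination of the three preceding lemmas, and I expect no genuine obstacle; the only point needing care is the bookkeeping around $t^*_{ij} = \pm\infty$, namely checking that in that case $a^*_{ij}$ inherits the same infinite value so that parity is vacuous, and hence that restricting to $m_{ij} \ne \infty$ — the common hypothesis of Lemmas~\ref{lem:plus-odd} and~\ref{lem:minus-odd} — loses nothing.
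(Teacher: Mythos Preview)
Your proposal is correct and follows the same route as the paper: the forward direction combines Lemma~\ref{lem:0-edge} with the ``if'' parts of Lemmas~\ref{lem:plus-odd} and~\ref{lem:minus-odd}, and the reverse direction uses the ``only if'' parts of those two lemmas. If anything, you are more careful than the paper, which does not explicitly dispose of the $m_{ij}=\infty$ case before invoking Lemmas~\ref{lem:plus-odd} and~\ref{lem:minus-odd}.
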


By Lemmas~\ref{lem:plus-odd} and~\ref{lem:minus-odd}, and by Corollary~\ref{cor:a-is-even},  for each pair of vertices $v_i$ and $v_j$ we have that
 \begin{equation*}
    a^*_{ij}=
    \begin{cases}
    t^*_{ij}, & \text{if}\ t^{*}_{ij} = \pm \infty \\
        2t^*_{ij}-1, & \text{if}\
       z^{+}_{ij} = 1 \vee z^{-}_{ij} = 1\\
      2t^*_{ij}, & \text{otherwise}
    \end{cases}
  \end{equation*}
Thus, constructing $A^*$ from $T^*$, $Z^+$, and $Z^-$ costs $O(n^2)$ time.

\paragraph{Time cost.}
We now analyze the run time of \AlgoAPSPtoMINMAX{A}{n^2} where $A$ is the weighted  adjacency matrix of $G$.
In the following, the line numbers refer to the lines in Algorithm~\ref{palg:main}.
Let $f(n, \delta)$ denote the run time of the algorithm on the weighted adjacency matrix of a $\delta$-regular graph with $n$ vertices.
By Lemma \ref{lem:one-regular}, the computation in Lines~\ref{line:one-regular-begin}--\ref{line:one-regular-end} costs $O(n^{\omega})$ time.
 By Lemma \ref{lem:\canonical-costruction}, the construction of the \canonical{} graph in Line \ref{line:5}, costs $O(n^{\omega})$ time.
In Line \ref{line:6} the algorithm first computes the matrix $\lceil C^{\leq 2}/2\rceil$ in $O(n^{\omega})$ time (using FMM), and then the algorithm makes a recursive call that costs $f(n,\delta/2)$ time.
Lines \ref{line:7} and \ref{line:9} cost $O(n^2)$ time.
Line \ref{line:8} has two \tminmaxShortcut{} computations which cost $O(T_{TMM}(n))$ time.
Thus, $$f(n,\delta) = f(n,\lceil\delta/2\rceil) + O(n^{\omega}+T_{TMM}(n)).$$
Finally, since in each recursive $\delta$ is halved, the number of recursive calls is $\log \delta$.
Thus, $f(n,n^2) = \tilde{O}(n^{\omega}+T_{TMM}(n))$.

\section{A Simple Algorithm for \Name}\label{sec:restricted-algo}
Recall that Algorithm~\ref{palg:main} (of Theorem~\ref{thm:main}) makes use of an algorithm for \tminmaxShortcut{}.
However, these calls are applied to a \emph{restricted} family of inputs: the entries in the second matrix $B$ are $\pm \infty$, and the target matrix $T$ has the property that for any $i,j\in [n]$, $t_{ij}\le \min_{k=1}^{n} \{\max(a_{ik}, b_{kj})\}$.
In this section we prove Theorem~\ref{thm:restricted} by describing a simple algorithm for \tminmaxShortcut{} when the inputs are \emph{restricted} to the family of inputs that are seen when calling \tminmaxShortcut{} in Algorithm~\ref{palg:main}; see Definition~\ref{def:restricted-T-minmax}.
Our algorithm combines \emph{fast rectangular Boolean matrix multiplication} (BMM) with a \emph{heavy-light} type of decomposition
\footnote{According to Definition \ref{def:restricted-T-minmax} $A$ and $T$, are both from $\mathbb{R}^{n\times n}$, while in \AlgoAPSPtoMINMAX{}{} the matrices can have entries from $\{-\infty,\infty\}$.
However, if $t_{ij}\in\{-\infty,\infty\}$ then $a^*_{ij}$ is set to be $t_{ij}$ regardless the product.}.

For every $i \in [n]$, let $A_i$ be the $i^{th}$ row of $A$.
The algorithm sorts the pairs $(a_{ij},j)$, where $ j\in [n]$, in a lexicographically increasing order and stores the result in array $L_i$.
For $A_i$, we say that a value that occurs more than $n^t$ times in $A_i$ is \emph{heavy} for $A_i$, and
a value in $A_i$ that occurs at most $n^t$ times in $A_i$ is \emph{light} for $A_i$.
It is straightforward to partition the values in $A_i$ to heavy and light values in $O(n)$ time.
Notice that there are at most $n/n^t = n^{1-t}$ heavy elements for $A_i$.

The algorithm constructs a rectangular  matrix $H$ as follows.
For every $i\in [n]$ and every heavy value $x$ in $A_i$
the algorithm adds a row to $H$ that corresponds to $x$.
Let $\rho(i,x)$ be the index of the row in $H$ that is added for heavy value $x$.
Since there are at most $n^{1-t}$ heavy values for each $A_i$, the number of rows in $H$ is at most $n^{2-t}$.
For $j\in [n]$, the $j^{th}$ entry in the $\rho(i,x)$'th row is set to $1$ if and only if the $j^{th}$ entry of $A_i$ contains the value $x$; otherwise  the $j^{th}$ entry is set to $0$.
Thus, $H$ is a Boolean matrix of size $O(n^{2-t})\times n$.

The algorithm converts the matrix $B$ into a Boolean matrix $B'$ as follows. For every $ i,j \in [n]$ let:
\begin{equation*}
  b'_{ij} =
  \begin{cases}
        1 & \text{if}\ b_{ij} = -\infty\\
        0 &  \text{otherwise}
        \end{cases}
\end{equation*}

Next, the algorithm computes the rectangular BMM $F=H B'$.
Finally, the algorithm constructs matrix $C$ as follows. Consider the value $t_{ij}$ from the target matrix $T$, where  $i,j \in[n]$. If $t_{ij}$ is heavy in row $A_i$ then the algorithm sets $c_{ij}$ to $f_{qj}$ for $q=\rho(i,t_{ij})$.
If $t_{ij}$ is light, then the algorithm uses $L_i$ to access all the occurrences of $t_{ij}$ in $A_i$.
Let $I=\{ k_1,\ldots,k_\ell \}$ be the set of  all  indices of columns in row $i$ that contain the value $t_{ij}$. Thus, for every $k\in I$ we have $a_{ik}=t_{ij}$.
If there is $k\in I$ such that $b_{kj}=-\infty$ the algorithm sets $c_{ij}$ to $1$, otherwise the algorithm sets $c_{ij}$ to $0$.

\paragraph{Correctness.}
We now show that the matrix $C$ computed above equals $A \ovee_T B$.
By Definition~\ref{def:restricted-T-minmax}  $c_{ij}$ should be $1$ if  $t_{ij} = \min_{k=1}^{n} \{\max(a_{ik}, b_{kj})\}$, and $0$ otherwise. By Definition~\ref{def:restricted-T-minmax}   we also have  $t_{ij} \leq  \min_{k=1}^{n} \{\max(a_{ik}, b_{kj})\}$.
Thus, if there exists a $k$ such that $t_{ij} = \max(a_{ik}, b_{kj}) $ then $t_{ij} =  \min_{k=1}^{n} \{\max(a_{ik}, b_{kj})\}$.
Notice that if there exists $k'$ such that $t_{ij} > \max(a_{ik'}, b_{k'j})$, then $t_{ij} >  \min_{k=1}^{n} \{\max(a_{ik}, b_{kj})\}$, which is a contradiction.
Moreover, if for every $k\in[n]$,  $t_{ij} \neq \max(a_{ik}, b_{kj})$, then $t_{ij} \neq \min_{k=1}^{n} \{\max(a_{ik}, b_{kj})\}$.
Since $B\in \{-\infty,\infty\}^{n\times n}$, then we conclude that there exits $k\in [n]$ such that $t_{ij} = \max(a_{ik}, b_{kj})$
if and only if $a_{ik} = t_{ij}$ and $b_{kj} = -\infty$.

If $t_{ij}$ is light in row $A_i$ then for every $k\in [n]$ where $a_{ik}=t_{ij}$ the algorithm checks whether $b_{kj}=-\infty$ or not. Thus if  $t_{ij} = \max(a_{ik}, b_{kj})$ and $t_{ij}$ is light then the algorithm will detect this case.
If $t_{ij}$ is heavy in row $A_i$ then row $\rho(i,x)$ in $H$ has $1$ in the same columns that contain $t_{ij}$ in row $A_i$ and $0$ in  all other columns. In matrix $B'$ there are $1$ values in all entries that correspond to entries in $B$ with value $-\infty$ and $0$ in  all other locations.
Hence, in $F$, $f_{qj}=1$ for $q=\rho(i,t_{ij})$ if and only if there exists a $k\in [n]$ such that $a_{ik} = t_{ij}$ and $b_{kj} = -\infty$ which in turn implies that
$t_{ij} = \max(a_{ik}, b_{kj})$.

\paragraph{Time cost.}
The cost of handling a light value is $O(n^t + \log n)$.
Since $T$ contains at most $O(n^2)$ light values, the total cost for handling all light values is at most $O(n^{2+t})$.
Computing $F$ using fast rectangular matrix multiplication takes $O(n^{\omega(2-t,1,1)})$ time.
Thus, the total time cost of the algorithm is $O(n^{2+t} +n^{\omega(2-t,1,1)})$.

\bibliographystyle{plain}
\bibliography{bib}

\begin{thebibliography}{10}

\bibitem{AGM97}
N.~Alon, Z.~Galil, and O.~Margalit.
\newblock On the exponent of the all pairs shortest path problem.
\newblock {\em Journal of Computer and System Sciences}, 54(2):255--262, 1997.

\bibitem{BKW19}
K.~Bringmann, M.~K{\"{u}}nnemann, and K.~Wegrzycki.
\newblock Approximating {APSP} without scaling: equivalence of approximate
  min-plus and exact min-max.
\newblock In {\em Proceedings of the 51st Annual {ACM} {SIGACT} Symposium on
  Theory of Computing, {STOC}}, pages 943--954, 2019.

\bibitem{Chan10}
T.~C. Chan.
\newblock More algorithms for all-pairs shortest paths in weighted graphs.
\newblock {\em {SIAM} J. Comput.}, 39(5):2075--2089, 2010.

\bibitem{Chan05}
T.~M. Chan.
\newblock All-pairs shortest paths with real weights in
  \emph{O}(\emph{n}\({}^{\mbox{3}}\)/log \emph{n}) time.
\newblock In {\em Algorithms and Data Structures, 9th International Workshop,
  {WADS}}, pages 318--324, 2005.

\bibitem{chan2006all}
T.~M. Chan.
\newblock All-pairs shortest paths for unweighted undirected graphs in o (mn)
  time.
\newblock In {\em Proceedings of the seventeenth annual ACM-SIAM symposium on
  Discrete algorithm (SODA)}, pages 514--523. SIAM, 2006.

\bibitem{CW16}
T.~M. Chan and R.~R. Williams.
\newblock Deterministic apsp, orthogonal vectors, and more: Quickly
  derandomizing razborov-smolensky.
\newblock In {\em Proceedings of the twenty-seventh annual ACM-SIAM symposium
  on Discrete algorithms (SODA)}, pages 1246--1255. SIAM, 2016.

\bibitem{DP09}
R.~Duan and S.~Pettie.
\newblock Fast algorithms for (max, min)-matrix multiplication and bottleneck
  shortest paths.
\newblock In {\em Proceedings of the twentieth annual ACM-SIAM symposium on
  Discrete algorithms}, pages 384--391. Society for Industrial and Applied
  Mathematics, 2009.

\bibitem{floyd62}
R.~W. Floyd.
\newblock Algorithm 97: shortest path.
\newblock {\em Communications of the ACM}, 5(6):345, 1962.

\bibitem{fredman76}
M.~L. Fredman.
\newblock New bounds on the complexity of the shortest path problem.
\newblock {\em SIAM Journal on Computing}, 5(1):83--89, 1976.

\bibitem{Gall14a}
F.~L. Gall.
\newblock Powers of tensors and fast matrix multiplication.
\newblock In {\em International Symposium on Symbolic and Algebraic
  Computation, {ISSAC}}, pages 296--303, 2014.

\bibitem{GU18}
F.~L. Gall and F.~Urrutia.
\newblock Improved rectangular matrix multiplication using powers of the
  coppersmith-winograd tensor.
\newblock In {\em Proceedings of the Twenty-Ninth Annual ACM-SIAM Symposium on
  Discrete Algorithms}, pages 1029--1046. SIAM, 2018.

\bibitem{Han04}
Y.~Han.
\newblock Improved algorithm for all pairs shortest paths.
\newblock {\em Inf. Process. Lett.}, 91(5):245--250, 2004.

\bibitem{Han06}
Y.~Han.
\newblock An \emph{O}(\emph{n}\({}^{\mbox{3}}\)
  (loglog\emph{n}/log\emph{n})\({}^{\mbox{5/4}}\)) time algorithm for all pairs
  shortest paths.
\newblock In {\em 14th Annual European Symposium on Algorithms, {ESA}}, pages
  411--417, 2006.

\bibitem{HanT12}
Y.~Han and T.~Takaoka.
\newblock An \emph{O}(n 3 loglogn/log2 n) time algorithm for all pairs shortest
  paths.
\newblock In {\em Algorithm Theory - {SWAT} 2012 - 13th Scandinavian Symposium
  and Workshops}, pages 131--141, 2012.

\bibitem{RZ11}
L.~Roditty and U.~Zwick.
\newblock On dynamic shortest paths problems.
\newblock {\em Algorithmica}, 61(2):389--401, 2011.

\bibitem{roy1959transitivite}
B.~Roy.
\newblock Transitivit{\'e} et connexit{\'e}.
\newblock {\em Comptes Rendus Hebdomadaires Des Seances De L Academie Des
  Sciences}, 249(2):216--218, 1959.

\bibitem{Seidel95}
R.~Seidel.
\newblock On the all-pairs-shortest-path problem in unweighted undirected
  graphs.
\newblock {\em J. Comput. Syst. Sci.}, 51(3):400--403, 1995.

\bibitem{takaoka91}
T.~Takaoka.
\newblock A new upper bound on the complexity of the all pairs shortest path
  problem.
\newblock In {\em 17th International Workshop, {WG} '91, Fischbachau, Germany,
  June 17-19, 1991, Proceedings}, pages 209--213, 1991.

\bibitem{Takaoka04}
T.~Takaoka.
\newblock A faster algorithm for the all-pairs shortest path problem and its
  application.
\newblock In {\em Computing and Combinatorics, 10th Annual International
  Conference, {COCOON}}, pages 278--289, 2004.

\bibitem{Takaoka05}
T.~Takaoka.
\newblock An \emph{O}(\emph{n}\({}^{\mbox{3}}\)loglog\emph{n}/log\emph{n}) time
  algorithm for the all-pairs shortest path problem.
\newblock {\em Inf. Process. Lett.}, 96(5):155--161, 2005.

\bibitem{WW10}
V.~Vassilevska~Williams and R.~R. Williams.
\newblock Subcubic equivalences between path, matrix and triangle problems.
\newblock In {\em 51th Annual {IEEE} Symposium on Foundations of Computer
  Science, {FOCS}}, pages 645--654, 2010.

\bibitem{warshall1962theorem}
S.~Warshall.
\newblock A theorem on boolean matrices.
\newblock {\em Journal of the ACM (JACM)}, 9(1):11--12, 1962.

\bibitem{Williams18}
R.~R. Williams.
\newblock Faster all-pairs shortest paths via circuit complexity.
\newblock {\em {SIAM} J. Comput.}, 47(5):1965--1985, 2018.

\bibitem{zwick02}
U.~Zwick.
\newblock All pairs shortest paths using bridging sets and rectangular matrix
  multiplication.
\newblock {\em Journal of the ACM (JACM)}, 49(3):289--317, 2002.

\bibitem{Zwick04}
U.~Zwick.
\newblock A slightly improved sub-cubic algorithm for the all pairs shortest
  paths problem with real edge lengths.
\newblock In {\em Algorithms and Computation, 15th International Symposium,
  {ISAAC}}, pages 921--932, 2004.

\end{thebibliography}

\newpage
\appendix

\section{Proof of Lemma~\ref{lem:minus-odd}}\label{app:minus-odd}

\begin{proof}[Proof of Lemma~\ref{lem:minus-odd}]
The proof is a case analysis showing that the only case in which $z^{-}_{ij}= 1$ is when there exists a shortest path from $v_i$ to $v_j$ whose last edge has weight $-1$ and $a^*_{ij}$ is odd, and that in such a case, it must be that $z^{-}_{ij}= 1$.

Notice that, since $t^*_{ij} \ne \infty$ and by the definition of $X^-$, if there exists a path from $v_i$ to $v_j$ in $G_c$ whose last edge has weight $-1$, then there exists a value $\hat k\in [n]$ such that $t^*_{i\hat k} = \min_{k=1}^{n} \{\max(t^*_{ik}, x^-_{kj})\}\ne \infty$.
Moreover, since $t^*_{i\hat k} \ne \infty$, there exists a path from $v_i$ to $v_j$ in $G_c$ whose last edge has weight $-1$ and the vertex preceding $v_j$ on this path is $v_{\hat k}$.
Therefore, since $C^*=A^*$ and by the triangle inequality, $2t^*_{ij}-1 \le c^*_{ij} \le c^*_{i\hat k}-1 \le 2t^*_{i\hat k} -1$, and so $t^*_{ij}  \le t_{i\hat k} = \min_{k=1}^{n} \{\max(t^*_{ik}, x^-_{kj})\} $.
Moreover, notice that, by  Definition~\ref{def:t-minmax}, $z^{-}_{ij}= 1$ if and only if  $\min_{k=1}^{n} \{\max(t^*_{ik}, x^-_{k j})\} = t^*_{ij}$.


Suppose that there is no path from $v_i$ to $v_j$ whose last edge has weight $-1$. Then for every $1\le k \le n$ we have that either $t^*_{ik} =\infty$ or $x^-_{k j} =\infty$ (since otherwise there is a path from $v_i$ to $v_k$ and $c_{kj}=-1$).
Thus, $\min_{k=1}^{n} \{\max(t^*_{ik}, x^-_{k j})\} =\infty$.
Since we assumed that $t^*_{ij}\ne \infty$ then  it is guaranteed that in this case $z^{-}_{ij}= 0$.

Thus, for the rest of the proof we assume that there exists a path $P$ from $v_i$ to $v_j$ in $G_c$ whose last edge is $(v_k,v_j)$, $c_{kj}=-1$, and without loss of generality the prefix of $P$ from $v_i$ to $v_k$ is a shortest path.
The following three categories cover all the possibilities for  $P$:
 (i) $P$ is a shortest path and $a^*_{ij}$ is odd.
(ii)  $P$ is a shortest path and $a^*_{ij}$ is even.
(iii) $P$ is not a shortest path.

\paragraph{First category.}
Consider  the case where the  weight of the last edge $(v_k,v_j)$ of $P$ is $-1$ and $a^*_{ij}$ is odd.
Since $a^*_{ij}$ is odd we have $a^*_{ij}=2t^*_{ij}-1$. Since $a^*_{ik} = a^*_{ij} +1$ we have that $a^*_{ik}$ is even, implying  that $a^*_{ik} = 2t^*_{ik}$. Therefore, $2t^*_{ik}= a^*_{ik}=a^*_{ij}+1=2t^*_{ij} $, and
$t^*_{ik} = t^*_{ij}$.

\paragraph{Second category.}
We now turn to the case where the  weight of the last edge $(v_k,v_j)$ of $P$ is $-1$ and $a^*_{ij}$ is even.
Since $a^*_{ij}$ is even, we have $a^*_{ij}=2t^*_{ij}$.
Since $a^*_{ik}=a^*_{ij}+1$ then $a^*_{ik}$ is odd, implying that
$a^*_{ik}=2t^*_{ik}-1$. Therefore,  $2t^*_{ik}-1 = a^*_{ik}=a^*_{ij}+1=2t^*_{ij}+1$, and $t^*_{ik} = t^*_{ij}+1 > t^*_{ij}$.

\paragraph{Third category.}
Next, consider the case where $P$ is not a shortest path. Thus, $a^*_{ik} -1 > a^*_{ij}$.
Recall that  $2t^*_{ij}-1 \leq a^*_{ij}$
and $a^*_{ik}\leq 2t^*_{ik}$.
Thus, $2t^*_{ij}-1 \leq a^*_{ij} < a^*_{ik}-1 \leq 2t^*_{ik}-1$
and so $t^*_{ik} > t^*_{ij}$.

\paragraph{Conclusion.}
Notice that when evaluating $\min_{k=1}^{n} \{\max(t^*_{ik}, x^-_{k j})\}$, the term $\max(t^*_{ik}, x^-_{k j}) \ne \infty$ only for values of $k$ for which there exists a path from $v_i$ to $v_k$ in $G_c$ and $c_{kj}=-1$.
Thus, since we assume that $t^*_{ij}\ne \infty$, the only values of $k$ which need to be considered are values of $k$ such that there exists a path from $v_i$ to $v_j$ whose last edge is $(v_k,v_j)$ and $c_{kj}=-1$; all other values of $k$ have $\max(t^*_{ik}, x^-_{kj}) = \infty$ and are therefore not relevant.

If all such paths fall into either the second or third category, then for every such path $P$ whose last edge is $(v_k,v_j)$ we have $\max(t^*_{ik},x^-_{k j}) = t^*_{ik} > t^*_{ij}$, and so $\min_{k=1}^{n} \{\max(t^*_{ik}, x^-_{k j})\} > t^*_{ij}$ implying $z^-_{ij} = 0$.
On the other hand, if there exists at least one such path that falls into the first category, then for any such path $P$ whose last edge is $(v_k, v_j)$, we have $\max(t^*_{ik},x^-_{k j}) = t^*_{ij}$.
Since we always have $t^*_{ij} \leq  \min_{k=1}^{n} \{\max(t^*_{ik}, x^-_{kj})\}$, it follows that
$\min_{k=1}^{n} \{\max(t^*_{ik},x^-_{k j})\} = t^*_{ij}$, and so $z^-_{kj} =1$.
\end{proof}

\section{Time Cost of Theorem~\ref{thm:restricted}}\label{app:run-time}
The goal is to find the value $0\le t \le 1$ which minimizes $O(n^{2+t} +n^{\omega(2-t,1,1)})$. 
This goal is equivalent to finding the value  $1\le k \le 2$ such that $\omega(k,1,1) = 4-k$.
In the following we use the fact that  the function $\omega(k,1,1)$ is convex, so we apply linear interpolation between the values of $\omega(k,1,1)$ for $k=1.3$ and $k=1.4$, which are given in~\cite{GU18}.
Specifically, $\omega(1.3,1,1) = 2.621644 $ and $\omega(1.4,1,1)=2.708400 $.
Therefore, the line connecting $(1.3, 2.621644)$ and $(1.4,2.708400)$ is above the point $(k,\omega(k,1,1))$ that we are searching for. This line is given by the equation $$y=0.86756x+1.493816.$$
Solving $y(\hat k) = 4-\hat k$, we have $\hat k=1.3419156349$. 
Therefore, 
\begin{align*}
  \min_{1\le k\le 2} \{\omega(k,1,1) + 4-k\} \le &  \omega(\hat k,1,1) + 4-\hat k\\
  \le  & y(\hat k) + 4-\hat k\\
  =& y(1.3419156349) + 4- 1.3419156349\\
  = & 2.658043651,
\end{align*}
and so $O(n^{2+t} +n^{\omega(2-t,1,1)}) < O(n^{2.658043651})$.

\end{document}